\newtheorem{problem}{Problem}
\newtheorem{remark}{Remark}
\newtheorem{deff}{Definition}
\newtheorem{thm}{Theorem}
\newtheorem{coro}{Corollary}
\newtheorem{prop}{Proposition}
\newtheorem{lem}{Lemma}
\definecolor{rangyek}{RGB}{0, 75, 255}
\definecolor{rangdo}{RGB}{239, 62, 91}
\definecolor{START}{rgb}{0.4660 0.6740 0.1880}
\definecolor{TARGET}{rgb}{0 0.4470 0.7410}
\definecolor{OBSTACLES}{rgb}{0.93,0.69,0.13}
\definecolor{START1}{rgb}{0, 1, 0}
\definecolor{TARGET1}{rgb}{0.3010 0.7450 0.9330}
\definecolor{OBSTACLES1}{rgb}{1, 0, 0}
\newcommand{\reff}[1]{\textcolor{purple}{\ref{#1}}}
\renewcommand{\ref}[1]{\hyperref[#1]{\reff{#1}}}
\renewcommand{\eqref}[1]{\textnormal{(\hyperref[#1]{\textcolor{purple}{\ref*{#1}}})}}
\let\oldcitealp\citealp
\renewcommand{\cite}[1]{[\textcolor{TARGET}{\oldcitealp{#1}}]}
\xpretocmd{\maketitle}{\hypersetup{
		colorlinks = true,
		linkcolor = rangyek,
		citecolor = rangdo,
		urlcolor = rangyek,
		filecolor = rangdo,
		allcolors = rangyek
}}{}{}
\newsavebox\myboxA
\newsavebox\myboxB
\newlength\mylenA
\newcommand*\xoverline[2][0.75]{%
	\sbox{\myboxA}{$\m@th#2$}%
	\setbox\myboxB\null
	\ht\myboxB=\ht\myboxA%
	\dp\myboxB=\dp\myboxA%
	\wd\myboxB=#1\wd\myboxA
	\sbox\myboxB{$\m@th\overline{\copy\myboxB}$}
	\setlength\mylenA{\the\wd\myboxA}
	\addtolength\mylenA{-\the\wd\myboxB}%
	\ifdim\wd\myboxB<\wd\myboxA%
	\rlap{\hskip 0.5\mylenA\usebox\myboxB}{\usebox\myboxA}%
	\else
	\hskip -0.5\mylenA\rlap{\usebox\myboxA}{\hskip 0.5\mylenA\usebox\myboxB}%
	\fi}
\newtcolorbox{resp}[1][]{%
	enhanced jigsaw,%
	colback=gray!5!white,%
	colframe=gray!80!black,%
	size=small,%
	boxrule=1pt,%
	halign title=flush center,%
	coltitle=black,%
	breakable,%
	drop shadow=black!50!white,%
	attach boxed title to top left={xshift=1cm,yshift=-\tcboxedtitleheight/2,yshifttext=-\tcboxedtitleheight/2},%
	minipage boxed title=3cm,%
	boxed title style={%
		colback=white,%
		size=fbox,%
		boxrule=1pt,%
		boxsep=2pt,%
		underlay={%
			\coordinate (dotA) at ($(interior.west) + (-0.5pt,0)$);
			\coordinate (dotB) at ($(interior.east) + (0.5pt,0)$);
			\begin{scope}[gray!80!black]
				\fill (dotA) circle (2pt);
				\fill (dotB) circle (2pt);
			\end{scope}
		}%
	},%
	#1%
}
\newtcolorbox{mybox}[2][]{enhanced,
	attach boxed title to top left={xshift=1cm,yshift=-2mm},
	fonttitle=\bfseries,varwidth boxed title=0.7\linewidth,
	colbacktitle=gray!45!white,coltitle=gray!10!black,colframe=gray!50!black,
	interior style={top color=gray!5!white,bottom color=gray!0!white},
	boxed title style={boxrule=0.75mm,colframe=white,
		borderline={0.1mm}{0mm}{gray!50!black},
		borderline={0.1mm}{0.75mm}{gray!50!black},
		interior style={top color=gray!30!white,bottom color=gray!5!white,
			middle color=gray!50!white},
		drop fuzzy shadow},
	title={#2},#1}
\newcommand{\R}{{\mathds{R}}}
\newcommand{\Rp}{{\mathds{R}_{> 0}}}
\newcommand{\Rpz}{{\mathds{R}_{\geq 0}}}
\newcommand{\N}{{\mathds{N}}}
\newcommand{\Np}{{\mathds{N}_{\geq 1}}}
\newcommand{\I}{{\mathds{I}}}
\newcommand{\Xo}{{\mathcal{X}_{\mathrm{o}}}}
\newcommand{\Xu}{{\mathcal{X}_{\mathrm{1}}}}
\newcommand{\redsquare}{\tikz\fill[red!25] (0,0) rectangle (2mm,2mm);}
\definecolor{DD}{rgb}{0.30, 0.34, 0.93}
\definecolor{DDD}{rgb}{0, 1, 0.76}
\newcommand{\bluesquare}{\tikz\fill[DD!25] (0,0) rectangle (2mm,2mm);}
\newcommand{\greensquare}{\tikz\fill[DDD!25] (0,0) rectangle (2mm,2mm);}
\begin{document}
	
	\title[Data-Driven Controller Synthesis for Discrete-Time General Nonlinear Systems]{Data-Driven Dynamic Controller Synthesis for Discrete-Time General Nonlinear Systems}
	
	\author{Behrad Samari}
	\affiliation{%
			\institution{School of Computing}
			\institution{Newcastle University}
            \country{}
		}
	\email{behrad.samari@newcastle.ac.uk}
	\author{Abolfazl Lavaei}
	\affiliation{%
			\institution{School of Computing}
		   \institution{Newcastle University}
           \country{}
		}
	\email{abolfazl.lavaei@newcastle.ac.uk}

	\begin{abstract}
		Synthesizing safety controllers for general nonlinear systems is a highly challenging task, particularly when the system models are unknown, and input constraints are present. While some recent efforts have explored data-driven safety controller design for nonlinear systems, these approaches are primarily limited to specific classes of nonlinear dynamics (\emph{e.g.,} polynomials) and are not applicable to general nonlinear systems. This paper develops a direct data-driven approach for discrete-time \emph{general nonlinear systems}, facilitating the simultaneous learning of control barrier certificates (CBCs) and \emph{dynamic controllers} to ensure safety properties under input constraints. Specifically, by leveraging the \emph{adding-one-integrator approach}, we incorporate the controller’s dynamics into the system dynamics to synthesize a virtual static-feedback controller for the augmented system, resulting in a dynamic safety controller for the actual dynamics. We collect input-state data from the augmented system during a finite-time experiment, referred to as a \emph{single trajectory}. Using this data, we learn augmented CBCs and the corresponding virtual safety controllers, ensuring the safety of the actual system and adherence to input constraints over a finite time horizon. We demonstrate that our proposed conditions boil down to some data-dependent linear matrix inequalities (LMIs), which are easy to satisfy. We showcase the effectiveness of our data-driven approach through two case studies: one exhibiting significant nonlinearity and the other featuring high dimensionality.
	\end{abstract}
	
	\keywords{General nonlinear systems, direct data-driven approaches, control barrier certificates, input constraints, dynamic safety controllers, formal guarantees}
	
	\maketitle
	
	\section{Introduction}\label{Section: Introduction}
	Controlling real-world dynamic systems, particularly those with safety-critical applications, demands an in-depth grasp of their inherent nonlinear dynamics. These systems often exhibit complex interdependent behaviors that add layers of difficulty to both analysis and controller synthesis. This complexity is particularly evident in practical systems where obtaining an exact mathematical model is rarely possible. In such cases, the absence of a precise model makes the synthesis of reliable controllers challenging, as it requires capturing system behavior accurately across diverse operating conditions.
	
	In response, the control community has increasingly turned to data-driven approaches to address the challenges posed by unknown (or partially known) models. These methodologies utilize observed data to inform both the analysis and design of controllers, offering a practical pathway to address systems with unknown dynamics. In particular, by relying on empirical data, these techniques aim to capture the essential behaviors of the system without requiring a fully specified mathematical model. However, despite these promising advances, ensuring formal safety guarantees for \emph{general nonlinear systems} remains a substantial obstacle. Specifically, the absence of precise models often restricts the ability to provide rigorous assurances, especially in systems with highly complex nonlinearities.
	
	\textbf{State-of-the-art.} The existing literature offers the concept of control barrier certificates (CBCs) to provide safety guarantees for complex dynamic systems with continuous state spaces (see \emph{e.g.,} \cite{prajna2004safety, prajna2007framework, wieland2007constructive,ames2019control,xiao2023safe}). In a nutshell, a CBC is analogous to a Lyapunov function defined over the state space of a dynamical system, establishing a framework of inequality constraints that apply to both its value and its evolution over time, governed by the system's dynamics. Therefore, if an appropriate level set of the CBC can delineate an unsafe region from all possible system trajectories originating from a specified initial set, the existence of such a CBC provides a formal (probabilistic) safety guarantee for the system (see \emph{e.g.,} \cite{ames2016control, clark2021control, santoyo2021barrier, clark2021verification, samari2024multiplicative, zaker2024compositional,lavaei2024scalable,wooding2024protect,nejati2024context}). However, these works fundamentally rely on the assumption that an accurate mathematical model of the system dynamics is available. Yet, in real-world applications, such models are often either unavailable or too complex to be practically useful.
	
	To address this critical issue while also embracing the shift toward data-driven approaches, the existing literature offers two distinct yet valuable data-driven lenses: (i) \emph{indirect} data-driven approaches, in which the model is first identified, followed by designing the controller through powerful model-based techniques (see \emph{e.g.}, \cite{jagtap2020control, dhiman2021control}), and (ii) \emph{direct} data-driven methodologies, which bypass the model identification step and directly learn the desired controller from data \cite{hou2013model, dorfler2022bridging}. In indirect techniques, although sophisticated methods can be applied if the model identification phase is successful, current system identification methodologies have significant limitations when applied to general nonlinear systems \cite{kerschen2006past}, often making the identification process highly complex and time-intensive, if not impossible. Even when an accurate mathematical model of the system is obtained, indirect approaches inherently involve a \emph{two-layer} complexity: first, deriving the model, and second, utilizing model-based techniques to address the control problem.  In contrast, \emph{direct} data-driven methods reduce complexity to a single layer by learning the controller directly from collected data.
	
	Recently and followed by \cite{de2019formulas}, numerous direct data-driven methodologies have been developed to tackle various challenges in designing controllers for dynamical systems with unknown models, including (approximate) nonlinearity cancellation and robust controller design (\emph{e.g.,} \cite{de2023learning, berberich2022combining}). However, a notable limitation of these approaches is their inability to explicitly handle state constraints. Additionally, data-driven methods have emerged, enabling the synthesis of state-feedback controllers that guarantee the \emph{robust invariance} of a compact polyhedral set containing the origin (\emph{e.g.,} \cite{bisoffi2022controller}). While effective, these methods can be overly conservative, as feasible controllers may exist for \emph{certain subsets} of the polyhedral set even when a controller cannot be designed for the entire set. Furthermore, some studies employ \emph{scenario-based approaches} \cite{calafiore2006scenario} to design barrier certificates for control applications (\emph{e.g.,} \cite{nejati2023formal,aminzadeh2024physics}). However, these approaches rely on the assumption that the data are independent and identically distributed (i.i.d.), necessitating the collection of multiple trajectories, potentially up to millions in practical applications. In contrast, our methodology requires only a single input-state trajectory for conducting the control analysis.
	
	More recently, some progress has been made in designing CBCs and their safety controllers \cite{nejati2022data, samari2024single}. While \cite{nejati2022data} and \cite{samari2024single} address continuous- and discrete-time input-affine nonlinear systems with polynomial dynamics, respectively, they are not applicable to systems with \emph{general nonlinear dynamics}, which is the main focus of our work. Furthermore, these methods rely on data-driven sum-of-squares (SOS) optimization, which introduces scalability challenges compared to our proposed LMI conditions that can be efficiently solved using semidefinite programs (SDP). Additionally, while \cite{samari2024single} does not incorporate input constraints, \cite{nejati2022data} addresses them; however, this results in \emph{bilinearity} in the safety controller design, thus increasing computational complexity. Our approach, in contrast, supports input constraints without introducing any bilinearity.
	
	\textbf{Primary contributions.} Motivated by the challenges discussed, this work develops a direct data-driven framework to synthesize CBCs and corresponding safety controllers from collected data. Our framework is specifically designed for discrete-time systems with \emph{general nonlinear dynamics and input constraints}, ensuring a safety certificate over finite time horizons. To this end, we first employ an innovative control technique known as ``adding one integrator'', where we treat the control input as a state variable and incorporate its dynamics into those of the system, thereby constructing an \emph{augmented system} with a virtual control input. This step is then completed by adjusting the admissible control input set and embedding it within the augmented system's state set, ultimately enforcing the input constraints. Consequently, we design a virtual static-feedback safety controller for the augmented system, resulting in a \emph{dynamic} safety controller for the actual system. 
	
	To achieve this, we collect input-state data from the augmented system and leverage it to design a CBC and its associated static safety controller for the augmented system, ensuring the finite-time safety of the actual system while satisfying its input constraints. Our proposed conditions enable the design of CBCs and safety controllers for the augmented system using straightforward, data-dependent semidefinite programs. To maximize safety guarantees—ensuring the safety property over an extended finite time horizon—we also propose an alternative computational approach using a scenario convex program, allowing us to achieve the maximum horizon guarantee within our approach. Finally, we demonstrate the effectiveness of our data-driven framework through two complex case studies involving  highly nonlinear terms with unknown dynamics.
	
	\textbf{Organization.} The rest of the paper is structured as follows. In Section \reff{Section: Problem Formulation}, we present the mathematical preliminaries, notations, and formal definitions of discrete-time general nonlinear systems, CBCs, the augmented system, and the augmented version of CBCs. Section \reff{Section: Data-driven-disc} is allocated to presenting how we collect data and introducing our main results, guaranteeing system safety. In Section \reff{Section: bound}, we provide an alternative computational approach that can potentially boost the safety guarantee. Simulation results are provided in Section \reff{Section: Simul}, and the paper is concluded in Section \reff{Section: Conc}.
	
	\section{General Nonlinear Framework} \label{Section: Problem Formulation}
	
	\subsection{Notation}
	We adopt the following notation throughout the paper. We denote the set of real numbers as $\R$, while $\Rp$ and $\Rpz$ represent the sets of positive and non-negative real numbers, respectively. The set of non-negative integers is written as $\N$, while $\Np$ refers to the set of positive integers.
	Moreover, $\I_n$ denotes the $n \times n$ identity matrix, while $\pmb{0}_{n \times m}$ and $\pmb{0}_{n}$ represent the $n \times m$ zero matrix and the zero vector of dimension $n$, respectively. For any matrix $A$, $A^\top$ denotes its transpose. The symbol $\star$ is used to indicate the transposed element in the symmetric position within a symmetric matrix.
	We also adopt the notations $P \succ 0$ and $P \succeq 0$ to indicate that the \emph{symmetric} matrix $P$ is positive definite and positive semi-definite, respectively. Given a square matrix $P$, the maximum and minimum eigenvalues of $P$ are represented by $\lambda_{\max}(P)$ and $\lambda_{\min}(P)$, respectively.
	The Euclidean norm of a vector $x \in \R^{n}$ is denoted by $\vert x \vert$, while for a given matrix $A$, $\Vert A \Vert$ represents its induced 2-norm.
	The horizontal concatenation of vectors $x_i \in \R^n$ to form an $n \times N$ matrix is expressed as $\begin{bmatrix} x_1 & \hspace{-0.2cm} x_2 & \hspace{-0.2cm} \dots & \hspace{-0.2cm} x_N \end{bmatrix}$.
	Furthermore, the Cartesian product of sets $X$ and $Y$ is denoted by $X \times Y$, the relative complement of $X$ and $Y$ by $X \backslash Y$, and the union of $X$ and $Y$ by $X \cup Y$.
	
	\subsection{Discrete-Time General Nonlinear Systems}
	We start with a description of the discrete-time general nonlinear system that is of interest in this work.
	
	\begin{deff}[{\normalfont \textbf{dt-GNS}}]\label{def: dt-GNS}
		A discrete-time general nonlinear system (dt-GNS) $\Upsilon$ evolves according to
		\begin{align}
			\Upsilon : x^+ = \digamma(x, u), \label{eq: gen_disc_org}
		\end{align}
		where $x^+$ represents the state variables at the next time step, i.e., $x^+ \coloneq x(k + 1), \; k \in \N$. The state $x \in \mathcal{X}$, where $\mathcal{X} \subset \R^n$ is the state set, and the control input $u \in U$, where
		\begin{align}
			U = \big\{u \in \R^m  ~\big \vert~ \mathcal{C}_j^\top u \leq 1, \; j = 1, \dots, \mathrm{j}\big\} \subset \R^m, \label{eq: input_const}
		\end{align}
		is the set of admissible control input set, with each $\mathcal{C}_j \in \R^m$ being a known vector that imposes a constraint on the control input. The mapping $\digamma : \R^n \times \R^m \rightarrow \R^n$ is smooth and devoid of any constant term. For simplicity, and without loss of generality, we can rewrite system \eqref{eq: gen_disc_org} as
		\begin{align}
			\Upsilon : x^+ = A f(x, u), \label{eq: gen_disc_rewritten}
		\end{align}
		where $A \in \R^{n \times N}$ is a constant matrix, and $f : \R^n \times \R^m \rightarrow \R^N$ is a smooth mapping without any constant term. The state trajectory of system $\Upsilon$ at time step $k \in \N$, given the initial condition $x_0 = x(0)$ and input signal $u(\cdot)$, is denoted by $x_{x_0u}(k)$.
		We use the tuple $\Upsilon = (A, f, \mathcal{X}, U)$ to represent the dt-GNS in \eqref{eq: gen_disc_rewritten}. 
	\end{deff}
	
	In this paper, we assume that matrix $A$ is \emph{unknown}, which aligns with the reality in many practical scenarios. However, a dictionary for $f(x, u)$—referring to a \emph{library or family} of functions—is assumed to be available that best fits the actual dynamics by being \emph{as extensive as necessary}.
	Specifically, the dictionary for $f(x, u)$ is comprehensive enough to include all potential terms in the true system's dynamics, even allowing for the inclusion of irrelevant terms. This dictionary comprises all linear functions of the state and control input, along with nonlinear terms derived from system insights, \emph{i.e.,}
	\begin{align}
		f(x, u) \coloneq \begin{bmatrix}
			x\\
			u\vspace{-.225cm}\\
			\tikz\draw [thick,dashed] (0,0) -- (1.25,0);\\
			\pmb{\Psi}(x, u)
		\end{bmatrix}\!, \label{eq: dictionary}
	\end{align}
	where $\pmb{\Psi} : \R^n \times \R^m \rightarrow \R^{N-(n+m)}$ represents the nonlinear functions.

    \begin{remark}[\normalfont \textbf{Dictionary Availability}]
        The assumption of an extensive dictionary for $f(x, u)$ is common in the data-driven literature (\emph{e.g.,} \cite{de2023learning, nejati2022data, samari2024single}) and is often satisfied in various systems, such as electrical and mechanical ones. In such cases, knowledge of the dynamics (i.e., an exaggerated dictionary for $f(x, u)$) is derived from first principles, even though the exact system parameters (i.e., matrix $A$) may still remain unknown.
    \end{remark}
	
	The primary objective of this work is to provide a safety certificate for the unknown dt-GNS, described in \eqref{eq: gen_disc_rewritten}, by employing the concept of control barrier certificates. Accordingly, we formally introduce this concept in the next subsection.
	
	\subsection{Control Barrier Certificates}
	Having described the dt-GNS as per Definition \reff{def: dt-GNS}, we now move on to present the notion of control barrier certificates for such a system, formalized in the subsequent definition \cite{prajna2004safety}.
	
	\begin{deff}[\textbf{CBC}] \label{def: CBC}
		Consider the dt-GNS $\Upsilon = (A, f, \mathcal{X}, U)$, with the initial set $\Xo \subset \mathcal{X}$ and the unsafe set $\Xu \subset \mathcal{X}$. A smooth function $\pmb{\mathds{B}}: \mathcal{X} \rightarrow \Rpz$ is defined as a control barrier certificate (CBC) for the dt-GNS over a time horizon $[0, T-1]$, with $T \in \Np$,  if there exist $\eta, \gamma \in \Rp$, with $\eta + cT < \gamma$, and $c \in \Rpz$, such that
		\begin{subequations}
			\begin{itemize}
				\item $\forall x \in \Xo$:
				\begin{align}
					\pmb{\mathds{B}}(x) \leq \eta, \label{eq: con_bar_1}
				\end{align}
				\item $\forall x \in \Xu$:
				\begin{align}
					\pmb{\mathds{B}}(x) \geq \gamma,  \label{eq: con_bar_2}
				\end{align}
				\item $\forall x \in \mathcal{X}, \; \exists u \in U,$ such that
				\begin{align}
					\pmb{\mathds{B}}(A f(x, u)) \leq \pmb{\mathds{B}}(x) + c.  \label{eq: con_bar_3}
				\end{align}
			\end{itemize}
		\end{subequations}
	\end{deff}
	
	As highlighted in Definition \reff{def: CBC}, CBCs are defined over the state space of the system, imposing a series of inequalities on itself, \emph{i.e.,} conditions \eqref{eq: con_bar_1} and \eqref{eq: con_bar_2}, as well as on its one-step transition, \emph{i.e.,} condition \eqref{eq: con_bar_3}. Then, a suitable level set of the CBC, denoted as $\eta$, acts as a separator, distinguishing the unsafe set (\emph{i.e.,} $\Xu$) from the trajectories of the system originating from the initial state set (\emph{i.e.,} $\Xo$).
	
	\begin{remark}[\normalfont \textbf{Special Case $c \rightarrow 0$}]
		Inspired by $c$-martingale properties in the stochastic setting~\cite{nejati2023formal}, we allow the CBC in \eqref{eq: con_bar_3} to decay with respect to a constant $c$. This approach significantly increases the likelihood of finding a feasible CBC while providing safety guarantees over finite time horizons. Since we enforce $T < \frac{\gamma - \eta}{c}$ in Definition \reff{def: CBC}, it can be deduced that the time horizon $T$ converges to infinity as $c$ approaches zero.
	\end{remark}
	
	\begin{remark}[\normalfont \textbf{Empty Intersection between $\Xo$ and $\Xu$}]\label{remark 1}
		Of note is that $\Xo$ and $\Xu$ should be disjoint to guarantee the safety property described in Definition \reff{def: CBC}. Specifically, we impose the condition $\gamma > \eta + cT$, leading to $\gamma > \eta$, which implicitly indicates that no intersection exists between the sets $\Xo$ and $\Xu$, as derived from conditions \eqref{eq: con_bar_1} and \eqref{eq: con_bar_2}.
	\end{remark}
	
	The following theorem, borrowed from \cite{prajna2004safety,nejati2023formal}, utilizes the concept of CBCs and ensures finite-horizon safety certificates for the dt-GNS $\Upsilon$.
	
	\begin{thm}[\normalfont \textbf{Safety Guarantee for dt-GNS}] \label{thm: model-based}
		Consider the dt-GNS $\Upsilon = (A, f, \mathcal{X}, U)$, as described in Definition \reff{def: dt-GNS}, where $\Xo$ and $\Xu$ represent the initial and unsafe sets, respectively. Let $\pmb{\mathds{B}}$ be a CBC for $\Upsilon$ as defined in Definition \reff{def: CBC}. Then, the dt-GNS is considered $T$-horizon safe, implying that the system's state trajectories will not enter the unsafe set $\Xu$ initiated from the initial set $\Xo$, i.e., $x_{x_0 u} \notin \Xu$ for any $x_0 \in \Xo$ and any $k \in [0, T-1]$, with $T < \frac{\gamma - \eta}{c}$, under the control input $u(\cdot)$ being designed to satisfy condition \eqref{eq: con_bar_3}.
	\end{thm}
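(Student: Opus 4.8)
The plan is to push the initial bound on the certificate forward in time along a closed-loop trajectory whose input is supplied by condition \eqref{eq: con_bar_3}, and then to compare the accumulated value of $\pmb{\mathds{B}}$ against the threshold $\gamma$ that every unsafe state is forced to exceed by \eqref{eq: con_bar_2}. First I would fix an arbitrary initial state $x_0 \in \Xo$. At each time step $k$, because \eqref{eq: con_bar_3} asserts that for \emph{every} $x \in \mathcal{X}$ there exists an admissible $u \in U$ with $\pmb{\mathds{B}}(A f(x, u)) \leq \pmb{\mathds{B}}(x) + c$, I can select such an input $u(k) \in U$ at the current state $x_{x_0u}(k)$. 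This yields an admissible input signal $u(\cdot)$ and a corresponding trajectory $x_{x_0u}(\cdot)$ to which the one-step drift bound applies at every step.

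Next I would establish, by induction on $k$, the telescoping estimate $\pmb{\mathds{B}}(x_{x_0u}(k)) \leq \eta + ck$. The base case $k = 0$ is immediate from \eqref{eq: con_bar_1}, since $x_0 \in \Xo$ gives $\pmb{\mathds{B}}(x_{x_0u}(0)) \leq \eta$. For the inductive step, applying \eqref{eq: con_bar_3} at the state $x_{x_0u}(k)$ together with the induction hypothesis gives $\pmb{\mathds{B}}(x_{x_0u}(k+1)) \leq \pmb{\mathds{B}}(x_{x_0u}(k)) + c \leq (\eta + ck) + c = \eta + c(k+1)$, closing the induction.

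Finally, for any $k \in [0, T-1]$ with $T < \frac{\gamma - \eta}{c}$, the estimate together with $k \leq T - 1$ and $c \in \Rpz$ gives $\pmb{\mathds{B}}(x_{x_0u}(k)) \leq \eta + ck \leq \eta + c(T-1) \leq \eta + cT < \gamma$. Suppose, toward a contradiction, that $x_{x_0u}(k) \in \Xu$ for some such $k$; then \eqref{eq: con_bar_2} forces $\pmb{\mathds{B}}(x_{x_0u}(k)) \geq \gamma$, contradicting the strict bound just derived. Hence no trajectory point enters $\Xu$ over the horizon $[0, T-1]$, which is precisely $T$-horizon safety.

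I do not expect a serious obstacle here, as the argument reduces to a telescoping induction; the only real care is in the bookkeeping at the horizon endpoints and in respecting the strict inequality $\eta + cT < \gamma$, which is exactly what converts the accumulated drift $ck$ into a value kept strictly below $\gamma$. A minor point to address cleanly is the degenerate regime $c = 0$, in which $T$ is unconstrained and the bound collapses to $\pmb{\mathds{B}}(x_{x_0u}(k)) \leq \eta < \gamma$ for all $k$, recovering the infinite-horizon guarantee.
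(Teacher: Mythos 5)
Your proof is correct and takes essentially the same route as the paper's: the paper borrows Theorem \reff{thm: model-based} from prior work but gives exactly this argument for the augmented system in Corollary \reff{Corollary 1} — recursively accumulating the one-step drift bound of \eqref{eq: con_bar_3} to get $\pmb{\mathds{B}}(x_{x_0u}(k)) \leq \eta + ck$, invoking \eqref{eq: con_bar_1} at $k=0$, using $\eta + cT < \gamma$ to keep the value strictly below $\gamma$, and concluding from \eqref{eq: con_bar_2} that the trajectory avoids $\Xu$. Your extra bookkeeping (explicitly selecting $u(k)$ via the existential quantifier in \eqref{eq: con_bar_3} and treating the $c=0$ case) only makes explicit what the paper leaves implicit.
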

	
	Given that the dt-GNS in this work is in the general form of nonlinear systems—potentially introducing nonlinearities in the control input itself—designing a (static) feedback control law to ensure the $T$-horizon safety property is challenging. Additionally, one of the objectives is to enforce the input constraints in \eqref{eq: input_const} while simultaneously ensuring the $T$-horizon safety of the dt-GNS. These challenges motivate us to treat the control input as a state variable, incorporating its dynamics into the system’s dynamics to construct an augmented system, where its state includes both the system's actual states and the control input (cf. system \eqref{eq: final_sys}). Consequently, instead of designing a static-feedback control law to satisfy the safety property, we design a \emph{dynamic controller} for the actual system.
	
	Notably, our proposed approach offers advantages over existing methods, \emph{e.g.,} \cite{nejati2022data}, where bilinearity arises in constructing the CBC and its associated safety controller under certain control input constraints (cf. condition (20) and Remark 13 in \cite{nejati2022data})—a challenge not present in our setting. It is worth noting that the technique of ``adding one integrator'' has been widely used in the sliding mode control literature (see, for instance, \cite{bartolini1998chattering}). We introduce the augmented system in the following subsection.
	
	\begin{figure*}
		\centering
		
		\subfloat[The actual admissible control input set]{%
			\includegraphics[width=0.6\textwidth]{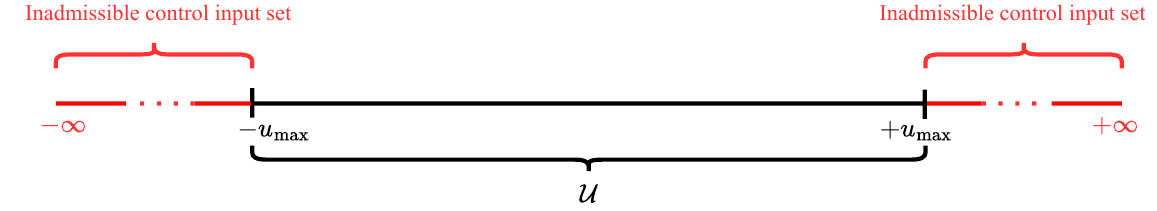} 
			\label{fig:subfig1}
		}

		\subfloat[The adjustments on the actual admissible control input set to obtain sets \eqref{eq1}-\eqref{eq2}]{%
			\includegraphics[width=0.6\textwidth]{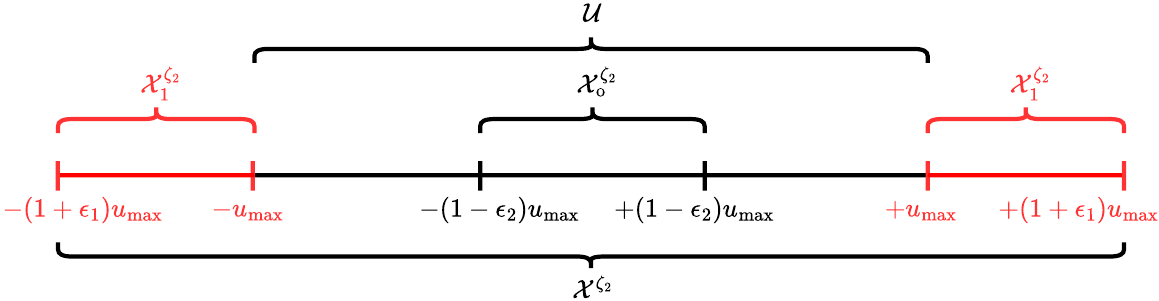} 
			\label{fig:subfig2}
		}
		
		\caption{As shown in Subfigure \reff{fig:subfig1}, if one selects the entire admissible control input set as the initial set for $\zeta_2$ (\emph{i.e., } $\mathcal{X}^{\zeta_2}_\mathrm{o}$), it would be impossible to find a level set that distinguishes the inadmissible set (\emph{i.e.,} the unsafe set for $\zeta_2$ denoted by $\mathcal{X}^{\zeta_2}_\mathrm{1}$) from $\mathcal{X}^{\zeta_2}_\mathrm{o}$. However, as illustrated in Subfigure \reff{fig:subfig2}, with the proposed adjustments, it becomes feasible to find a level set that effectively separating $\mathcal{X}^{\zeta_2}_\mathrm{o}$ from $\mathcal{X}^{\zeta_2}_\mathrm{1}$\!.
		}
		\label{fig:mainfig}
	\end{figure*}
	
	\subsection{Augmented dt-GNS} \label{subsec:a}
	Being motivated to incorporate the control input dynamic into the dt-GNS dynamic, we now proceed with introducing the augmented system. 
	To do so, we first define $\zeta_1 \coloneq x$ and $\zeta_2 \coloneq u$, and introduce the augmented state
	\begin{align*}
		\zeta \coloneq \begin{bmatrix}
			\zeta_1\\
			\zeta_2
		\end{bmatrix} \in \R^{n+m},
	\end{align*}
	while redefining the dictionary \eqref{eq: dictionary} as
	\begin{align}
		\mathcal{F}(\zeta) \coloneq \begin{bmatrix}
			\zeta\vspace{-.225cm}\\
			\tikz\draw [thick,dashed] (0,0) -- (1.25,0);\\
			\pmb{\Psi}(\zeta)
		\end{bmatrix}\!, \label{eq: final_dict}
	\end{align}
	where $\pmb{\Psi}(\zeta)$ represents the nonlinear terms.
	
	We consider the new ``virtual control input'' $\vartheta \in \mathcal{U}^\zeta$, where $\mathcal{U}^\zeta \subset \R^m$ is the virtual control input set, and add the controller dynamic to the augmented system in the form of $u^+ = \vartheta$, \emph{i.e.,} $\zeta_2^+ = \vartheta$. 
	Consequently, the augmented dt-GNS $\Upsilon_a$, denoted by A-dt-GNS, is formulated as
	\begin{align}
		\Upsilon_a : \zeta^+ = \mathcal{A} \mathcal{F}(\zeta) + \mathcal{B}\vartheta,\label{eq: final_sys}
	\end{align}
	where
	\begin{align*}
		\mathcal{A} = \begin{bmatrix}
			A\\
			\pmb{0}_{m \times N}
		\end{bmatrix}\in\R^{(n+m) \times N}, \quad \mathcal{B} = \begin{bmatrix}
		\pmb{0}_{n \times m}\\
		\I_m
		\end{bmatrix} \in \R^{(n+m) \times m}.
	\end{align*}
	
	Since the state of the A-dt-GNS $\Upsilon_a$ should stay safe in our setting, we now define the state, initial, and unsafe sets for the A-dt-GNS $\Upsilon_a$. Given that these sets for $\zeta_1$ are exactly similar to those of the state $x$ for the actual dt-GNS $\Upsilon$, we define
	\begin{subequations}\label{eq: sets_z1}
		\begin{align}
			\mathcal{X}^{\zeta_1} & \coloneq \mathcal{X},\\
			\mathcal{X}^{\zeta_1}_\mathrm{o} & \coloneq \Xo,\\
			\mathcal{X}^{\zeta_1}_\mathrm{1} & \coloneq \Xu.
		\end{align}
	\end{subequations}
	
	For the sets related to $\zeta_2$, we aim to construct them by making specific adjustments to the set $U$ in \eqref{eq: input_const}. Specifically, one cannot naively define $\mathcal{X}^{\zeta_2}_\mathrm{o} \coloneq U$ and $\mathcal{X}^{\zeta_2}_\mathrm{1} \coloneq \R^m \backslash U$, as this would result in shared boundaries on the edges, making it impossible to find a level set that effectively separates these sets 
	(cf. Remark \reff{remark 1}). Therefore, we should select a subset of \eqref{eq: input_const} as the initial set for $\zeta_2$ and define its unsafe set starting from the boundaries of \eqref{eq: input_const} onward.
	
	To achieve this, we introduce the design parameters $\epsilon_1, \epsilon_2 \in (0, 1)$, and define the following sets:
	\begin{subequations}\label{eq: sets_z2}
		\begin{align}
			\mathcal{X}^{\zeta_2} & \coloneq \big\{\zeta_2 \in \R^m  ~\big \vert ~\mathcal{C}_j^\top \zeta_2 \leq 1 + \epsilon_1, \; j = 1, \dots, \mathrm{j}\big\} \subset \R^m, \label{eq1}\\
			\mathcal{X}^{\zeta_2}_\mathrm{o} & \coloneq \big\{\zeta_2 \in \R^m  ~\big \vert ~ \mathcal{C}_j^\top \zeta_2 \leq 1 - \epsilon_2, \; j = 1, \dots, \mathrm{j}\big\} \subset \R^m,\\
			\mathcal{U} & \coloneq \big\{\zeta_2 \in \R^m ~\big \vert ~ \mathcal{C}_j^\top \zeta_2 \leq 1, \; j = 1, \dots, \mathrm{j}\big\} \subset \R^m,\label{eq:tmp}\\
			\mathcal{X}^{\zeta_2}_\mathrm{1} & \coloneq (\mathcal{X}^{\zeta_2} \backslash \mathcal{U}) \cup \partial \mathcal{U} \subset \R^m, \label{eq2}
		\end{align}
	\end{subequations}
	where \eqref{eq:tmp} is exactly the same as \eqref{eq: input_const} since $\zeta_2 = u$, and $\partial \mathcal{U} \coloneq  \big\{\zeta_2 \in \R^m  ~\big \vert ~ \mathcal{C}_j^\top \zeta_2 = 1 , \; j = 1, \dots, \mathrm{j}\big\} \subset \R^m$ is the \emph{boundary} of the set $\mathcal{U}$. 
	
	To provide the general reader with the necessary intuition regarding the definition of various sets for $\zeta_2$ within our framework, Figure \reff{fig:mainfig} offers a visual representation of how we define sets \eqref{eq1}-\eqref{eq2} based on the admissible control input set \eqref{eq: input_const}. In this illustrative example, we assume that the single control input is constrained by $-u_{\max} \leq u \leq u_{\max}$, where $u_{\max} \in \Rp$ is an arbitrary value.
	
	We now consider \eqref{eq: sets_z1}, \eqref{eq: sets_z2}, and define the state set of the A-dt-GNS $\Upsilon_a $ as $\mathcal{X}^\zeta \coloneq \mathcal{X}^{\zeta_1} \times \mathcal{X}^{\zeta_2} \subset \R^{n+m}$, the initial set of the A-dt-GNS as $\mathcal{X}^\zeta_{\mathrm{o}} \coloneq \mathcal{X}^{\zeta_1}_\mathrm{o} \times \mathcal{X}^{\zeta_2}_\mathrm{o} \subset \mathcal{X}^\zeta$, and the unsafe set of the A-dt-GNS as 
	\begin{align*}
			\mathcal{X}^\zeta_{\mathrm{1}} \coloneq \overbrace{(\mathcal{X}^{\zeta_1} \times \mathcal{X}^{\zeta_2}_\mathrm{1})}^{(a)} \cup \overbrace{(\mathcal{X}^{\zeta_1}_\mathrm{1} \times \mathcal{X}^{\zeta_2})}^{(b)} \subset \mathcal{X}^\zeta,
		\end{align*} where the term $(a)$ is to ensure that the input constraints are met, while the term $(b)$ is to ensure the safety of the system's states (cf. Figure \reff{fig:states} in Section \reff{Section: Simul}). We denote the state trajectory of $\Upsilon_a$ by $\zeta_{\zeta_0\vartheta}(k)$ at time step $k \in \N$, starting from the initial condition $\zeta_0 = \zeta(0)$ and under the virtual input signal $\vartheta(\cdot)$.
	We represent the  A-dt-GNS \eqref{eq: final_sys} using the tuple $\Upsilon_a = (\mathcal{A}, \mathcal{B},  \mathcal{F}, \mathcal{X}^\zeta, \mathcal{U}^\zeta)$.
	
	\begin{remark}[\normalfont \textbf{On Choosing $\epsilon_1$ and $\epsilon_2$}]\label{remark 2}
		It is desirable to choose $\epsilon_1$ as close to zero as possible, since this can facilitate satisfying conditions \eqref{eq: A-con_bar_3} and \eqref{eq: con_thm_con4} with potentially a smaller level set, leading to a longer time-horizon guarantee. Conversely, for $\epsilon_2$, it is beneficial to set its value as close to one as possible; otherwise, finding valid level sets with an adequate guarantee may become challenging.
	\end{remark}
	
	Having introduced the A-dt-GNS $\Upsilon_a$ with its corresponding details, we now proceed with defining the augmented CBC for this system, as formalized in the following definition.
	
	\begin{deff}[\textbf{A-CBC}] \label{def: A-CBC}
		Consider the A-dt-GNS $\Upsilon_a \!=\!  (\mathcal{A}, \mathcal{B}, \mathcal{F},\mathcal{X}^\zeta,\\ \mathcal{U}^\zeta)$, with the initial and unsafe sets $\mathcal{X}^\zeta_{\mathrm{o}}, \mathcal{X}^\zeta_{\mathrm{1}} \subset \mathcal{X}^\zeta$, respectively. A smooth function $\pmb{\mathds{B}_a}: \mathcal{X}^\zeta \rightarrow \Rpz$ is defined as an augmented CBC (A-CBC) for the A-dt-GNS over a time horizon $[0, T-1]$, with $T \in \Np$, if there exist $\eta_a, \gamma_a \in \Rp$, with $\eta_a + c_aT < \gamma_a$, and $c_a \in \Rpz$, such that
		\begin{subequations}
			\begin{itemize}
				\item $\forall \zeta \in \mathcal{X}^\zeta_{\mathrm{o}}$:
				\begin{align}
					\pmb{\mathds{B}_a}(\zeta) \leq \eta_a, \label{eq: A-con_bar_1}
				\end{align}
				\item $\forall \zeta \in \mathcal{X}^\zeta_{\mathrm{1}}$:
				\begin{align}
					\pmb{\mathds{B}_a}(\zeta) \geq \gamma_a,  \label{eq: A-con_bar_2}
				\end{align}
				\item $\forall \zeta \in \mathcal{X}^\zeta, \; \exists \vartheta \in \mathcal{U}^\zeta,$ such that
				\begin{align}
					\pmb{\mathds{B}_a}(\mathcal{A} \mathcal{F}(\zeta) + \mathcal{B}\vartheta) \leq \pmb{\mathds{B}_a}(\zeta) + c_a.  \label{eq: A-con_bar_3}
				\end{align}
			\end{itemize}
		\end{subequations}
	\end{deff}
	
	Inspired by Theorem \reff{thm: model-based}, we now provide the next result on the $T$-horizon safety of the A-dt-GNS $\Upsilon_a$ to ensure that the state trajectory of the augmented system stays safe over the time horizon $[0, T-1]$, with $T < \frac{\gamma_a - \eta_a}{c_a}$.
	
	\begin{coro}[\normalfont\textbf{Safety Guarantee for A-dt-GNS}]\label{Corollary 1}
		Consider the A-dt-GNS $\Upsilon_a =  (\mathcal{A}, \mathcal{B}, \mathcal{F},\mathcal{X}^\zeta, \mathcal{U}^\zeta)$, as described in \eqref{eq: final_sys}, with $\mathcal{X}^\zeta_{\mathrm{o}}$ and $\mathcal{X}^\zeta_{\mathrm{1}}$ being its initial and unsafe sets, respectively. Let $\pmb{\mathds{B}_a}$ be an A-CBC for $\Upsilon_a$ as defined in Definition \reff{def: A-CBC}. Then, the A-dt-GNS is considered $T$-horizon safe, implying that the system's state trajectories will not enter the unsafe set $\mathcal{X}^\zeta_{\mathrm{1}}$ initiated from the initial set $\mathcal{X}^\zeta_{\mathrm{o}}$, i.e., $\zeta_{\zeta_0 \vartheta} \notin \mathcal{X}^\zeta_{\mathrm{1}}$ for any $\zeta_0 \in \mathcal{X}^\zeta_{\mathrm{o}}$ and any $k \in [0, T-1]$, with $T < \frac{\gamma_a - \eta_a}{c_a}$, under the virtual control input $\vartheta(\cdot)$  being designed to satisfy condition \eqref{eq: A-con_bar_3}.
	\end{coro}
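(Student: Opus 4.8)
The plan is to exploit the fact that Corollary \ref{Corollary 1} stands in exactly the same relation to the A-dt-GNS $\Upsilon_a$ as Theorem \ref{thm: model-based} does to the dt-GNS $\Upsilon$: the evolution in \eqref{eq: final_sys} is that of a discrete-time system with (virtual) input $\vartheta$, and the three A-CBC requirements \eqref{eq: A-con_bar_1}--\eqref{eq: A-con_bar_3} are the verbatim counterparts of the CBC requirements \eqref{eq: con_bar_1}--\eqref{eq: con_bar_3}. Hence the most economical route is to regard $\Upsilon_a$ as a dt-GNS over the state set $\mathcal{X}^\zeta$ and simply invoke Theorem \ref{thm: model-based}. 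For completeness, however, I would give the self-contained inductive argument, since it makes transparent where the slack constant $c_a$ and the horizon bound $T < \frac{\gamma_a - \eta_a}{c_a}$ enter.

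First I would fix an arbitrary initial condition $\zeta_0 \in \mathcal{X}^\zeta_{\mathrm{o}}$ and apply condition \eqref{eq: A-con_bar_1} to obtain the base estimate $\pmb{\mathds{B}_a}(\zeta_0) \leq \eta_a$. Then, using the existence guarantee in \eqref{eq: A-con_bar_3}, at each step I would select an admissible virtual input $\vartheta(k) \in \mathcal{U}^\zeta$ for which $\pmb{\mathds{B}_a}(\zeta(k+1)) \leq \pmb{\mathds{B}_a}(\zeta(k)) + c_a$, where $\zeta(k+1) = \mathcal{A}\mathcal{F}(\zeta(k)) + \mathcal{B}\vartheta(k)$. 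A straightforward induction on $k$ then yields $\pmb{\mathds{B}_a}(\zeta_{\zeta_0\vartheta}(k)) \leq \eta_a + c_a k$ along this trajectory. Combining this accumulated bound with the horizon restriction, for any $k \in [0, T-1]$ we have $\eta_a + c_a k \leq \eta_a + c_a(T-1) < \eta_a + c_a T < \gamma_a$, where the last strict inequality is precisely the standing assumption of Definition \ref{def: A-CBC}. Thus $\pmb{\mathds{B}_a}(\zeta_{\zeta_0\vartheta}(k)) < \gamma_a$ over the whole horizon. Were the trajectory to enter $\mathcal{X}^\zeta_{\mathrm{1}}$ at some $k \in [0, T-1]$, condition \eqref{eq: A-con_bar_2} would force $\pmb{\mathds{B}_a}(\zeta_{\zeta_0\vartheta}(k)) \geq \gamma_a$, a direct contradiction. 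Therefore $\zeta_{\zeta_0\vartheta}(k) \notin \mathcal{X}^\zeta_{\mathrm{1}}$ for all $k \in [0, T-1]$ and all $\zeta_0 \in \mathcal{X}^\zeta_{\mathrm{o}}$, which is exactly the claimed $T$-horizon safety.

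The one point demanding care, and the main obstacle in making the chaining rigorous, is the implicit requirement that the visited states remain in $\mathcal{X}^\zeta$: condition \eqref{eq: A-con_bar_3} is asserted only for $\zeta \in \mathcal{X}^\zeta$, and it is this condition that licenses the inductive step at each successive state. I would handle this exactly as in the model-based Theorem \ref{thm: model-based} from which the result is inherited, arguing that as long as $\zeta_{\zeta_0\vartheta}(k)$ has not reached $\mathcal{X}^\zeta_{\mathrm{1}}$ it still lies in $\mathcal{X}^\zeta$, so the decrease inequality and the accumulated bound remain valid up to the first potential violation; the estimate above then shows no such violation can occur before step $T-1$. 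The arithmetic itself is routine, so the proof is essentially a careful bookkeeping of these quantifiers rather than a computation.
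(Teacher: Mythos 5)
Your proposal is correct and follows essentially the same route as the paper's own proof: a recursive application of condition \eqref{eq: A-con_bar_3} starting from the base bound $\pmb{\mathds{B}_a}(\zeta_0) \leq \eta_a$ of \eqref{eq: A-con_bar_1}, yielding $\pmb{\mathds{B}_a}(\zeta_{\zeta_0\vartheta}(k)) \leq \eta_a + c_a k < \gamma_a$ for all $k \in [0, T-1]$, and then excluding entry into $\mathcal{X}^\zeta_{\mathrm{1}}$ via \eqref{eq: A-con_bar_2}. Your closing remark about the trajectory needing to remain in $\mathcal{X}^\zeta$ to license each inductive step flags a quantifier subtlety that the paper's proof leaves implicit, but it does not alter the argument (note only that your intermediate strict inequality $\eta_a + c_a(T-1) < \eta_a + c_a T$ degenerates to an equality when $c_a = 0$, which is harmless since the final inequality $\eta_a + c_a T < \gamma_a$ is strict).
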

	
	\begin{proof}
		Assume that there exists an $\pmb{\mathds{B}_a}(\zeta(\cdot))$ serving as an A-CBC for the A-dt-GNS $\Upsilon_a$. Then, upon condition \eqref{eq: A-con_bar_3}, we have $\pmb{\mathds{B}_a}(\zeta(k+1)) - \pmb{\mathds{B}_a}(\zeta(k)) \leq c_a$, with $\pmb{\mathds{B}_a}(\zeta(k+1)) =\pmb{\mathds{B}_a}(\mathcal{A} \mathcal{F}(\zeta) + \mathcal{B}\vartheta)$, leading to $\pmb{\mathds{B}_a}(\zeta(k)) - \pmb{\mathds{B}_a}(\zeta(0)) \leq kc_a$, recursively. Based on \eqref{eq: A-con_bar_1}, we know that $\pmb{\mathds{B}_a}(\zeta(0)) \leq \eta_a$, and thus, one has $\pmb{\mathds{B}_a}(\zeta(k)) \leq kc_a + \eta_a$. Having known that $\eta_a + c_aT < \gamma_a$, we clearly have $\pmb{\mathds{B}_a}(\zeta(k)) < \gamma_a$. Hence, according to \eqref{eq: A-con_bar_2}, it can be deduced that $\zeta(k) \notin \mathcal{X}^\zeta_{\mathrm{1}}$ for any $k \in [0, T-1]$, concluding the proof. It is evident that if $c_a = 0$, the safety guarantee extends to an infinite time horizon.
	\end{proof}
	
	\begin{remark}[\normalfont \textbf{Safety Guarantee and Input Constraint Enforcement for dt-GNS}]\label{Re_New}
		Given that the first state of the A-dt-GNS (i.e., $\zeta_1$) is exactly the main state of the actual system (i.e., $x$), it is evident that enforcing the safety property over the augmented system according to Corollary~\reff{Corollary 1} will ensure the safety property for the actual system as well. This is also apparent from the definition of the regions of interest for $\zeta_1$ in \eqref{eq: sets_z1}, which match exactly those for the actual system. Additionally, since the second state of the A-dt-GNS (i.e., $\zeta_2$) corresponds to the control input of the actual system (i.e., $u$), it follows that satisfying the safety property for this state, according to the defined regions of interest for $\zeta_2$ in \eqref{eq: sets_z2}, will enforce the input constraint $u$ in the actual system. It is evident that since $\zeta_2 = u$ and $\zeta_2^+ = \vartheta$, we have $u^+ = \vartheta$, indicating that the safety controller for the actual system is dynamic.
	\end{remark}
	
	Establishing an A-CBC for the A-dt-GNS $\Upsilon_a$ to ensure system safety, as outlined in Corollary \reff{Corollary 1}, requires knowledge of the dynamic upon which the states evolve, \emph{i.e.,} $\mathcal{A} \mathcal{F}(\zeta) + \mathcal{B}\vartheta$, as it appears in condition \eqref{eq: A-con_bar_3}. However, this information is not readily available in many practical applications. With this critical challenge in mind, we are now prepared to formally present the problem addressed in this work.
	
	\begin{resp}
		\begin{problem} \label{prob 1}
			Consider the A-dt-GNS $\Upsilon_a =  (\mathcal{A}, \mathcal{B}, \mathcal{F},\mathcal{X}^\zeta, \mathcal{U}^\zeta)$ in \eqref{eq: final_sys}, where matrix $\mathcal{A}$ is unknown. Using only a single input-state trajectory collected from $\Upsilon_a$, develop a formal data-driven framework for designing an A-CBC and its associated safety controller, as defined in Definition \reff{def: A-CBC}, to ensure the safety property guaranteed by Corollary \reff{Corollary 1} without explicitly identifying the system model. Consequently, the dt-GNS $\Upsilon$ remains safe, and the input constraints \eqref{eq: input_const} would be enforced as outlined in Remark~\reff{Re_New}.
		\end{problem}
	\end{resp}
	
	With the main problem of the paper established, we move forward to present our proposed data-driven framework in the next section.
	
	\section{Data-Driven Design of A-CBC and Safety Controller}\label{Section: Data-driven-disc}
	This section presents our data-driven scheme for constructing an A-CBC and its associated safety controller for the unknown A-dt-GNS, described in \eqref{eq: final_sys}.
	To achieve this, we first consider the A-CBC in the form of a quadratic function, expressed as $\pmb{\mathds{B}_a}(\zeta) = \zeta^\top \mathcal{P} \zeta$, where $\mathcal{P} \succ 0$. Following this, we collect input-state data (\emph{i.e.,} observations) over the time interval $\left[0, \mathcal{T}\right]$, where $\mathcal{T} \in \Np$ represents the total number of collected samples:
	\begin{subequations}\label{eq: data}
		\begin{align}
			\pmb{\mathcal{S}} & \coloneq \begin{bmatrix}
				\zeta(0) & \zeta(1) & \dots & \zeta(\mathcal{T} - 1)
			\end{bmatrix}\!,\\
			\pmb{\mathcal{I}} &  \coloneq \begin{bmatrix}
				\vartheta(0) & \vartheta(1) & \dots & \vartheta(\mathcal{T} - 1)
			\end{bmatrix}\!,\\
			\pmb{\mathcal{S}^+} & \coloneq \begin{bmatrix}
				\zeta(1) & \zeta(2) & \dots & \zeta(\mathcal{T})
			\end{bmatrix}\!.
		\end{align}
	\end{subequations}
	We emphasize that the set of input-state trajectories in \eqref{eq: data} is regarded as a \emph{single trajectory} of the unknown A-dt-GNS $\Upsilon_a$.
	
	\begin{remark}[\normalfont\textbf{On Collecting $\pmb{\mathcal{S}^+}$}]
		It is important to underline that in the discrete-time framework, the trajectory of $\pmb{\mathcal{S}^+}$ can be accurately captured, as the system's unknown dynamics progress recursively at discrete time steps. This is advantageous compared to the continuous-time scenario (e.g., \cite{nejati2022data}), where $\pmb{\mathcal{S}^+}$ involves state derivatives at sampling instants, which are typically unavailable as direct observations.
	\end{remark}
	
	Having been inspired by~\cite{de2023learning}, we now provide the subsequent lemma upon whose conditions we are able to characterize the data-based representation of the A-dt-GNS $\Upsilon_a$.
	
	\begin{lem}[\normalfont\textbf{Data-based {\normalfont \textbf{A-dt-GNS}}}]\label{lemma 1}
		Suppose $\mathcal{Z} \in \R^{\mathcal{T} \times N}$ is an arbitrary matrix being partitioned into $\mathcal{Z}_1 \in \R^{\mathcal{T} \times (n+m)}$ and $\mathcal{Z}_2 \in \R^{\mathcal{T} \times (N - (n+m))}$ such that one has
		\begin{align}
			\I_{N} = \pmb{\mathcal{M}} \overbrace{ \begin{bmatrix}
					\mathcal{Z}_1 & \mathcal{Z}_2
			\end{bmatrix}}^{\mathcal{Z}}, \label{eq: Lemma-con}
		\end{align}
		where $\pmb{\mathcal{M}} \in \R^{N \times \mathcal{T}}$ is a full row-rank matrix being obtained from the dictionary \eqref{eq: final_dict} and samples $\pmb{\mathcal{S}}$ as
		\begin{align*}
			\pmb{\mathcal{M}} \coloneq \begin{bmatrix}
				\zeta(0) & \zeta(1) & \dots & \zeta(\mathcal{T} - 1)\\
				\pmb{\Psi}(\zeta(0)) & \pmb{\Psi}(\zeta(1)) & \dots & \pmb{\Psi}(\zeta(\mathcal{T} - 1))
			\end{bmatrix}\!.
		\end{align*}
		Under the control law
		\begin{align}
			\vartheta = \mathcal{K}\mathcal{F}(\zeta), \label{eq: Lemma-control}
		\end{align}
		where $\mathcal{K} = \pmb{ \mathcal{I}} \mathcal{Z}$, the closed-loop  data-based representation of A-dt-GNS $\Upsilon_a$ is
		\begin{align}
			\zeta^+ = \pmb{\mathcal{S}^+} \mathcal{Z}_1 \zeta + \pmb{\mathcal{S}^+} \mathcal{Z}_2 \pmb{\Psi}(\zeta). \label{eq: Lemma-closed}
		\end{align}
	\end{lem}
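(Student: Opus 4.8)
The plan is to collapse the closed-loop dynamics into a purely data-dependent expression by combining two ingredients: the collected trajectory exactly satisfies the (unknown) open-loop dynamics, and the hypothesis \eqref{eq: Lemma-con} makes $\mathcal{Z}$ a right inverse of $\pmb{\mathcal{M}}$. First I would record the \emph{fundamental data equation}. Since every collected sample obeys $\zeta(k+1) = \mathcal{A}\mathcal{F}(\zeta(k)) + \mathcal{B}\vartheta(k)$ for $k = 0, \dots, \mathcal{T}-1$, and since $\mathcal{F}(\zeta) = \begin{bmatrix} \zeta \\ \pmb{\Psi}(\zeta) \end{bmatrix}$ makes the $k$-th column of $\pmb{\mathcal{M}}$ equal to $\mathcal{F}(\zeta(k))$, stacking these relations column-wise over the whole experiment yields
\begin{align*}
\pmb{\mathcal{S}^+} = \mathcal{A}\,\pmb{\mathcal{M}} + \mathcal{B}\,\pmb{\mathcal{I}}.
\end{align*}
This identity holds regardless of the unknown value of $\mathcal{A}$ and is the only place the true dynamics enter the argument.

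Next I would close the loop. Substituting the control law $\vartheta = \mathcal{K}\mathcal{F}(\zeta)$ from \eqref{eq: Lemma-control} into \eqref{eq: final_sys} gives $\zeta^+ = (\mathcal{A} + \mathcal{B}\mathcal{K})\mathcal{F}(\zeta)$, so the whole task reduces to expressing the closed-loop matrix $\mathcal{A} + \mathcal{B}\mathcal{K}$ through data alone. Right-multiplying the fundamental data equation by $\mathcal{Z}$ and invoking both $\mathcal{K} = \pmb{\mathcal{I}}\mathcal{Z}$ and the hypothesis $\pmb{\mathcal{M}}\mathcal{Z} = \I_N$ of \eqref{eq: Lemma-con}, I obtain
\begin{align*}
\pmb{\mathcal{S}^+}\mathcal{Z} = \mathcal{A}\,\pmb{\mathcal{M}}\mathcal{Z} + \mathcal{B}\,\pmb{\mathcal{I}}\mathcal{Z} = \mathcal{A} + \mathcal{B}\mathcal{K},
\end{align*}
which identifies $\mathcal{A} + \mathcal{B}\mathcal{K} = \pmb{\mathcal{S}^+}\mathcal{Z}$ entirely in terms of measured matrices.

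Finally I would substitute this closed-loop matrix back and unpack the block structure. Writing $\mathcal{Z} = \begin{bmatrix} \mathcal{Z}_1 & \mathcal{Z}_2 \end{bmatrix}$ and using $\mathcal{F}(\zeta) = \begin{bmatrix} \zeta \\ \pmb{\Psi}(\zeta) \end{bmatrix}$, the partition dimensions $\mathcal{Z}_1 \in \R^{\mathcal{T}\times(n+m)}$ and $\mathcal{Z}_2 \in \R^{\mathcal{T}\times(N-(n+m))}$ are conformable with $\zeta$ and $\pmb{\Psi}(\zeta)$, so $\mathcal{Z}\mathcal{F}(\zeta) = \mathcal{Z}_1\zeta + \mathcal{Z}_2\pmb{\Psi}(\zeta)$. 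Hence $\zeta^+ = \pmb{\mathcal{S}^+}\mathcal{Z}_1\zeta + \pmb{\mathcal{S}^+}\mathcal{Z}_2\pmb{\Psi}(\zeta)$, which is exactly \eqref{eq: Lemma-closed}.

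The derivation itself is routine linear algebra, so I do not anticipate a genuine obstacle; the only conceptual point worth flagging is the \emph{existence} of a matrix $\mathcal{Z}$ satisfying \eqref{eq: Lemma-con}, which is precisely why $\pmb{\mathcal{M}}$ is assumed full row-rank (forcing $\mathcal{T} \geq N$ and requiring a sufficiently rich single trajectory so that the dictionary evaluations span $\R^N$). I would present this as the standing data-quality requirement rather than a step in the algebraic manipulation. The subtlety that needs care is setting up the fundamental data equation correctly, namely reading the columns of $\pmb{\mathcal{M}}$ as stacked dictionary evaluations $\mathcal{F}(\zeta(k))$ so that it aligns with the open-loop map in \eqref{eq: final_sys}.
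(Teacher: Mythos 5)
Your proposal is correct and follows essentially the same route as the paper's proof: both rest on the fundamental data equation $\pmb{\mathcal{S}^+} = \mathcal{A}\pmb{\mathcal{M}} + \mathcal{B}\pmb{\mathcal{I}}$, the right-inverse property $\pmb{\mathcal{M}}\mathcal{Z} = \I_N$, and the block decomposition $\mathcal{Z}\mathcal{F}(\zeta) = \mathcal{Z}_1\zeta + \mathcal{Z}_2\pmb{\Psi}(\zeta)$, differing only in that you first isolate the identity $\mathcal{A} + \mathcal{B}\mathcal{K} = \pmb{\mathcal{S}^+}\mathcal{Z}$ whereas the paper performs the same substitutions inline within the expansion of $\zeta^+$.
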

	
	\begin{proof}
		We first start with the fact that the data collected should fit the A-dt-GNS, \emph{i.e.,} $\zeta^+ = \mathcal{A} \mathcal{F}(\zeta) + \mathcal{B}\vartheta,$ as
		\begin{align}
			\pmb{\mathcal{S}^+} = \mathcal{A} \pmb{\mathcal{M}} + \mathcal{B} \pmb{\mathcal{I}}. \label{eq: Lemma-tmp1}
		\end{align}
		Furthermore, we have
		\begin{align*}
			\zeta^+ & = \mathcal{A} \mathcal{F}(\zeta) + \mathcal{B}\vartheta\\
			& \!\! \overset{\eqref{eq: Lemma-control}}{=}  \mathcal{A} \mathcal{F}(\zeta) + \mathcal{B} \pmb{\mathcal{I}}\mathcal{Z}\mathcal{F}(\zeta)= (\mathcal{A} + \mathcal{B} \pmb{\mathcal{I}}\mathcal{Z}) \mathcal{F}(\zeta)\\
			& \!\! \overset{\eqref{eq: Lemma-con}}{=} (\mathcal{A} \overbrace{\pmb{\mathcal{M}} \mathcal{Z}}^{\I_N} + \mathcal{B} \pmb{\mathcal{I}}\mathcal{Z}) \mathcal{F}(\zeta)= (\mathcal{A} \pmb{\mathcal{M}} + \mathcal{B} \pmb{\mathcal{I}}) \mathcal{Z} \mathcal{F}(\zeta)\\
			& \!\! \overset{\eqref{eq: Lemma-tmp1}}{=} \pmb{\mathcal{S}^+} \mathcal{Z}\mathcal{F}(\zeta) \overset{\eqref{eq: final_dict}}{=} \pmb{\mathcal{S}^+} \begin{bmatrix}
				\mathcal{Z}_1 & \mathcal{Z}_2
			\end{bmatrix}   \begin{bmatrix}
				\zeta\vspace{-.225cm}\\
				\tikz\draw [thick,dashed] (0,0) -- (1.25,0);\\
				\pmb{\Psi}(\zeta)
			\end{bmatrix}\\
			& = \pmb{ \mathcal{S}^+} \mathcal{Z}_1 \zeta + \pmb{\mathcal{S}^+} \mathcal{Z}_2 \pmb{\Psi}(\zeta),
		\end{align*}
		which concludes the proof.
	\end{proof}
	
	\begin{remark}[\normalfont\textbf{Richness of Collected Data}]\label{rem:tmp}
		It is evident that a necessary condition for \eqref{eq: Lemma-con} to be satisfied is that the matrix $\pmb{\mathcal{M}}$ must possess full row rank. This requires collecting at least $N + 1$ data points, i.e., $\mathcal{T} \geq N + 1$, which can be readily verified during data collection. This requirement can be interpreted as a natural extension of the rank condition imposed on $\pmb{\mathcal{S}}$ in the context of linear systems \cite{de2019formulas}.
	\end{remark}
	
	Since our safety analysis is confined to the compact set $\mathcal{X}^\zeta$, we choose to establish an upper bound for the nonlinear part embedded in $\pmb{\mathds{B}_a}(\mathcal{A} \mathcal{F}(\zeta) + \mathcal{B}\vartheta)$ (cf. condition \eqref{eq: con_thm_con4}). Specifically, on a compact set $\mathcal{X}^\zeta$, where $\pmb{\Psi}(\zeta)$ consists of smooth nonlinear functions, computing this upper bound is always feasible as such functions reach their maximum within the compact set and exhibit no unbounded behavior. By employing this upper bound, which is always achievable in our proposed framework, we are able to design A-CBCs for general nonlinear systems. This approach also yields data-dependent \emph{semidefinite programs}, offering greater scalability compared to existing methods for nonlinear polynomial systems that rely on data-dependent SOS programs.
	
	Under the closed-loop data-driven representation of the A-dt-GNS derived in Lemma \reff{lemma 1}, we now proceed with proposing the following theorem, which constitutes the main contribution of this work. This result enables the design of an A-CBC and its associated safety controller using only a single input-state trajectory from the A-dt-GNS.
	
	\begin{thm}[\normalfont\textbf{Data-Driven A-CBC and Safety Controller}]\label{thm: main}
		Consider the unknown A-dt-GNS $\Upsilon_a$ as per \eqref{eq: final_sys} with its closed-loop data-based representation $\mathcal{A} \mathcal{F}(\zeta) + \mathcal{B}\vartheta = \pmb{ \mathcal{S}^+} \mathcal{Z}_1 \zeta + \pmb{\mathcal{S}^+} \mathcal{Z}_2 \pmb{\Psi}(\zeta)$ as in Lemma \reff{lemma 1}. Let there exist the matrix $\mathcal{Z}_2 \in \R^{\mathcal{T} \times (N - (n+m))}$ such that
		\begin{subequations}\label{eq: thm}
			\begin{align}
				& \underset{\mathcal Z_2}{{\normalfont \text{minimize}}} \qquad \!\!\! \Vert \pmb{\mathcal{S}^+} \mathcal{Z}_2 \Vert, \label{eq: con_thm_min}\\
				& \underset{}{{\normalfont \text{subject to}}} \qquad \!\!\! \pmb{\mathcal{M}} \mathcal{Z}_2 = \begin{bmatrix}
					\pmb{0}_{(n+m)\times(N-(n+m))}\\
					\I_{N-(n+m)}
				\end{bmatrix}\!.\label{eq: con_thm_con1}
			\end{align}
			Moreover, suppose there exist matrices $\Pi \in \R^{(n+m) \times (n+m)}$, where $\Pi \succ 0$, and $\mathcal{Y} \in \R^{\mathcal{T} \times (n+m)}$ such that
			\begin{align}
				& \pmb{\mathcal{M}} \mathcal{Y} = \begin{bmatrix}
					\Pi\\
					\pmb{0}_{(N-(n+m)) \times (n+m)}
				\end{bmatrix}\!, \label{eq: con_thm_con2}\\
				& \begin{bmatrix}
					(\frac{1}{1+\varpi}) \Pi & \pmb{\mathcal{S}^+} \mathcal{Y}\\
					\star & \Pi
				\end{bmatrix} \succeq 0, \label{eq: con_thm_con3}
			\end{align}
		\end{subequations}
		for some $\varpi \in \Rp$.
		Then, one can deduce that $\pmb{\mathds{B}_a}(\zeta) = \zeta^\top \mathcal{P} \zeta$, with $\mathcal{P} \coloneq \Pi^{-1}$, is an A-CBC for the unknown A-dt-GNS $\Upsilon_a$, with
		\begin{subequations}
			\begin{align}
				c_a & = \big(1 + \frac{1}{\varpi}\big) \lambda_{\max}(\mathcal{P}) \max_{\zeta \in \mathcal{X}^\zeta} \big\vert (\pmb{\mathcal{S}^+} \mathcal{Z}_2) \pmb{\Psi}(\zeta) \big\vert^2, \label{eq: con_thm_con4}\\
				\eta_a & = \lambda_{\max}(\mathcal{P}) \max_{\zeta \in \mathcal{X}^\zeta_{\mathrm{o}}} \vert \zeta \vert^2, \label{eq: con_thm_con5}\\
				\gamma_a & = \lambda_{\min}(\mathcal{P}) \min_{\zeta \in \mathcal{X}^\zeta_{\mathrm{1}}} \vert \zeta \vert^2, \label{eq: con_thm_con6}
			\end{align}
		\end{subequations}
		provided that $\gamma_a > \eta_a$. In addition,
		$\vartheta = \pmb{\mathcal{I}} \begin{bmatrix}
			\mathcal{Y} \mathcal{P} & \mathcal{Z}_2
		\end{bmatrix} \mathcal{F}(\zeta)$ is its associated safety controller.
	\end{thm}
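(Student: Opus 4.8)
The plan is to verify the three defining conditions of an A-CBC from Definition \reff{def: A-CBC} for the quadratic candidate $\pmb{\mathds{B}_a}(\zeta) = \zeta^\top \mathcal{P}\zeta$ with $\mathcal{P} = \Pi^{-1}$, after first reconstructing the closed loop. The preliminary step is to set $\mathcal{Z}_1 \coloneq \mathcal{Y}\mathcal{P}$ and check that the hypothesis of Lemma \reff{lemma 1}, namely $\I_N = \pmb{\mathcal{M}}\begin{bmatrix}\mathcal{Z}_1 & \mathcal{Z}_2\end{bmatrix}$, holds. Indeed, condition \eqref{eq: con_thm_con2} gives $\pmb{\mathcal{M}}\mathcal{Z}_1 = \pmb{\mathcal{M}}\mathcal{Y}\mathcal{P} = \begin{bmatrix}\Pi\mathcal{P}\\ \pmb{0}\end{bmatrix} = \begin{bmatrix}\I_{n+m}\\ \pmb{0}\end{bmatrix}$ since $\Pi\mathcal{P} = \I$, while \eqref{eq: con_thm_con1} gives $\pmb{\mathcal{M}}\mathcal{Z}_2 = \begin{bmatrix}\pmb{0}\\ \I_{N-(n+m)}\end{bmatrix}$; stacking these column blocks yields $\I_N$. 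Hence Lemma \reff{lemma 1} applies and, under $\vartheta = \pmb{\mathcal{I}}\begin{bmatrix}\mathcal{Y}\mathcal{P} & \mathcal{Z}_2\end{bmatrix}\mathcal{F}(\zeta)$, the closed loop reads $\zeta^+ = \pmb{\mathcal{S}^+}\mathcal{Z}_1\zeta + \pmb{\mathcal{S}^+}\mathcal{Z}_2\pmb{\Psi}(\zeta)$, which also confirms the stated controller.

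The two set-membership conditions are then immediate from Rayleigh-quotient bounds. For \eqref{eq: A-con_bar_1}, any $\zeta\in\mathcal{X}^\zeta_{\mathrm{o}}$ obeys $\zeta^\top\mathcal{P}\zeta\le\lambda_{\max}(\mathcal{P})|\zeta|^2\le\lambda_{\max}(\mathcal{P})\max_{\zeta\in\mathcal{X}^\zeta_{\mathrm{o}}}|\zeta|^2 = \eta_a$, matching \eqref{eq: con_thm_con5}; for \eqref{eq: A-con_bar_2}, any $\zeta\in\mathcal{X}^\zeta_{\mathrm{1}}$ obeys $\zeta^\top\mathcal{P}\zeta\ge\lambda_{\min}(\mathcal{P})|\zeta|^2\ge\lambda_{\min}(\mathcal{P})\min_{\zeta\in\mathcal{X}^\zeta_{\mathrm{1}}}|\zeta|^2 = \gamma_a$, matching \eqref{eq: con_thm_con6}. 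The assumed $\gamma_a>\eta_a$ supplies the separation needed in the spirit of Remark \reff{remark 1}.

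The crux is condition \eqref{eq: A-con_bar_3}. I would abbreviate $F\coloneq\pmb{\mathcal{S}^+}\mathcal{Z}_1$ and $g(\zeta)\coloneq\pmb{\mathcal{S}^+}\mathcal{Z}_2\pmb{\Psi}(\zeta)$, so that $\zeta^+ = F\zeta + g(\zeta)$ and $\pmb{\mathds{B}_a}(\zeta^+) = \zeta^\top F^\top\mathcal{P}F\zeta + 2\zeta^\top F^\top\mathcal{P}g + g^\top\mathcal{P}g$. Applying Young's inequality to the cross term in the $\mathcal{P}$-weighted inner product (with $\mathcal{P}^{1/2}F\zeta$ and $\mathcal{P}^{1/2}g$ as the two vectors) gives $\pmb{\mathds{B}_a}(\zeta^+)\le(1+\varpi)\zeta^\top F^\top\mathcal{P}F\zeta + (1+\tfrac{1}{\varpi})g^\top\mathcal{P}g$. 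The key algebraic step is to establish $(1+\varpi)F^\top\mathcal{P}F\preceq\mathcal{P}$: taking the Schur complement of \eqref{eq: con_thm_con3} with respect to its top-left block $\tfrac{1}{1+\varpi}\Pi\succ0$ yields $\Pi\succeq(1+\varpi)\mathcal{Y}^\top(\pmb{\mathcal{S}^+})^\top\mathcal{P}\,\pmb{\mathcal{S}^+}\mathcal{Y}$, and a congruence transformation by $\mathcal{P}$ (using $\mathcal{P}\Pi\mathcal{P}=\mathcal{P}$ together with $\mathcal{P}\mathcal{Y}^\top=\mathcal{Z}_1^\top$ and $\mathcal{Y}\mathcal{P}=\mathcal{Z}_1$) converts this precisely into $\mathcal{P}\succeq(1+\varpi)F^\top\mathcal{P}F$. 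Consequently $(1+\varpi)\zeta^\top F^\top\mathcal{P}F\zeta\le\zeta^\top\mathcal{P}\zeta = \pmb{\mathds{B}_a}(\zeta)$, and bounding the residual via $g^\top\mathcal{P}g\le\lambda_{\max}(\mathcal{P})|g|^2\le\lambda_{\max}(\mathcal{P})\max_{\zeta\in\mathcal{X}^\zeta}|\pmb{\mathcal{S}^+}\mathcal{Z}_2\pmb{\Psi}(\zeta)|^2$ reproduces $c_a$ as in \eqref{eq: con_thm_con4}, establishing \eqref{eq: A-con_bar_3}.

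I expect the main obstacle to be the algebraic bookkeeping in this last step: choosing the correct block for the Schur complement, since it is the top-left block (not the bottom-right) that produces an $F^\top\mathcal{P}F$ term rather than an $FF^\top$-type term, and carefully threading the substitution $\mathcal{Z}_1=\mathcal{Y}\mathcal{P}$ through the congruence so that the LMI \eqref{eq: con_thm_con3} lines up exactly with the quadratic decrease inequality. The compactness of $\mathcal{X}^\zeta$ ensures the maximum defining $c_a$ is finite, so no further regularity argument is needed there, and the conclusion that $\pmb{\mathds{B}_a}$ is a valid A-CBC with the stated $\eta_a,\gamma_a,c_a$ and controller then follows by assembling the three verified conditions.
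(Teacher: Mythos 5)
Your proposal is correct and follows essentially the same route as the paper's own proof: defining $\mathcal{Z}_1 \coloneq \mathcal{Y}\mathcal{P}$ to invoke Lemma \reff{lemma 1}, splitting the one-step value via Cauchy--Schwarz/Young with parameter $\varpi$, converting the LMI \eqref{eq: con_thm_con3} through the Schur complement of its top-left block and a congruence by $\mathcal{P}$ into $(1+\varpi)F^\top\mathcal{P}F \preceq \mathcal{P}$, and bounding the residual nonlinear term and the level sets by eigenvalue (Rayleigh-quotient) estimates. The algebraic bookkeeping you flagged as the main obstacle is handled in exactly the way you anticipate, so no gap remains.
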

	
	\begin{proof}
		Let us define $\mathcal{Z}_1 \coloneq \mathcal{Y}\mathcal{P}$. Since $\mathcal{P} = \Pi^{-1}$, it follows that fulfilling both conditions \eqref{eq: con_thm_con1} and \eqref{eq: con_thm_con2} together ensures the fulfillment of condition \eqref{eq: Lemma-con} in Lemma \reff{lemma 1}. Thus, under the control law \eqref{eq: Lemma-control}, we can conclude that we have the closed-loop  data-based representation \eqref{eq: Lemma-closed}.
		
		We now continue with demonstrating the satisfaction of condition \eqref{eq: A-con_bar_3} under the fulfillment of the SDP \eqref{eq: thm}. Since $\pmb{\mathds{B}_a}(\zeta) = \zeta^\top \mathcal{P} \zeta$, one has
		\begin{align*}
			&\pmb{\mathds{B}_a}(\mathcal{A} \mathcal{F}(\zeta) + \mathcal{B}\vartheta)\\
			&= (\mathcal{A} \mathcal{F}(\zeta) + \mathcal{B}\vartheta)^\top \mathcal{P} (\mathcal{A} \mathcal{F}(\zeta) + \mathcal{B}\vartheta)\\
			& \!\! \overset{\eqref{eq: Lemma-closed}}{=} ( \pmb{\mathcal{S}^+} \mathcal{Z}_1 \zeta + \pmb{\mathcal{S}^+} \mathcal{Z}_2 \pmb{\Psi}(\zeta))^\top \mathcal{P} ( \pmb{\mathcal{S}^+} \mathcal{Z}_1 \zeta + \pmb{\mathcal{S}^+} \mathcal{Z}_2 \pmb{\Psi}(\zeta))\\
			&= \zeta^\top [ \pmb{\mathcal{S}^+} \mathcal{Z}_1]^\top \mathcal{P} [ \pmb{\mathcal{S}^+} \mathcal{Z}_1] \zeta + \pmb{\Psi}^\top(\zeta) [ \pmb{\mathcal{S}^+} \mathcal{Z}_2]^\top \mathcal{P} [ \pmb{\mathcal{S}^+} \mathcal{Z}_2] \pmb{\Psi}(\zeta)\\
			& \hspace{0.3cm} + 2 \overbrace{\zeta^\top [ \pmb{\mathcal{S}^+} \mathcal{Z}_1]^\top  \sqrt{\mathcal{P}}}^{\text{\textcolor{black}{$a$}}} \overbrace{\sqrt{\mathcal{P}}  [ \pmb{\mathcal{S}^+} \mathcal{Z}_2] \pmb{\Psi}(\zeta)}^{\text{\textcolor{black}{$b$}}}.
		\end{align*}
		According to the Cauchy-Schwarz inequality \cite{bhatia1995cauchy}, \emph{i.e.,}  $a b \leq \vert a \vert \vert b \vert,$ for any $a^\top, b \in \R^{n+m}$, followed by
		employing Young's inequality \cite{young1912classes}, \emph{i.e.,} $\vert a \vert \vert b \vert \leq \frac{\varpi}{2} \vert a \vert^2 + \frac{1}{2\varpi} \vert b \vert^2$, for any $\varpi \in \Rp$, one has
		\begin{align*}
			& 2 \zeta^\top [ \pmb{\mathcal{S}^+} \mathcal{Z}_1]^\top  \sqrt{\mathcal{P}} \sqrt{\mathcal{P}}  [ \pmb{\mathcal{S}^+} \mathcal{Z}_2] \pmb{\Psi}(\zeta)\\
			& \leq \varpi \zeta^\top [ \pmb{\mathcal{S}^+} \mathcal{Z}_1]^\top \mathcal{P} [ \pmb{\mathcal{S}^+} \mathcal{Z}_1] \zeta + \frac{1}{\varpi}  \pmb{\Psi}^\top(\zeta) [ \pmb{\mathcal{S}^+} \mathcal{Z}_2]^\top \mathcal{P} [ \pmb{\mathcal{S}^+} \mathcal{Z}_2] \pmb{\Psi}(\zeta).
		\end{align*}
		Thus, considering $\mathcal{Z}_1 = \mathcal{Y} \mathcal{P}$, we have
		\begin{align}\notag
			\pmb{\mathds{B}_a}(\mathcal{A} \mathcal{F}(\zeta) \!+\! \mathcal{B}\vartheta) & \!\leq\! (1 \!+\! \varpi) \zeta^\top \!\mathcal{P} [ \pmb{\mathcal{S}^+} \mathcal{Y}]^\top \mathcal{P} [ \pmb{\mathcal{S}^+} \mathcal{Y}] \mathcal{P} \zeta\\\label{New54}
			& \hspace{0.3cm} \!+\! (1\!+\!\frac{1}{\varpi}) \pmb{\Psi}^\top\!(\zeta) [ \pmb{\mathcal{S}^+} \mathcal{Z}_2]^\top \!\mathcal{P} [ \pmb{\mathcal{S}^+} \mathcal{Z}_2] \pmb{\Psi}(\zeta).
		\end{align}
		Furthermore, according to the Schur complement \cite{zhang2006schur}, and considering  condition \eqref{eq: con_thm_con3} with $\mathcal{P} = \Pi^{-1}$, we know that
		\begin{align*}
			\begin{bmatrix}
				(\frac{1}{1+\varpi}) \mathcal{P}^{-1} & \pmb{\mathcal{S}^+} \mathcal{Y}\\
				\star & \mathcal{P}^{-1}
			\end{bmatrix} \succeq 0 & \Leftrightarrow \mathcal{P}^{-1} - (1+\varpi) [\pmb{\mathcal{S}^+} \mathcal{Y}]^\top \mathcal{P} [\pmb{\mathcal{S}^+} \mathcal{Y}] \succeq 0\\
			& \Leftrightarrow \mathcal{P} \!-\! (1\!+\!\varpi) \mathcal{P} [\pmb{\mathcal{S}^+} \mathcal{Y}]^\top \mathcal{P} [\pmb{\mathcal{S}^+} \mathcal{Y}] \mathcal{P} \succeq 0.
		\end{align*}
		Therefore, it is clear that
		\begin{align}
			(1 + \varpi) \zeta^\top \mathcal{P} [ \pmb{\mathcal{S}^+} \mathcal{Y}]^\top \mathcal{P} [ \pmb{\mathcal{S}^+} \mathcal{Y}] \mathcal{P} \zeta \leq \zeta^\top \mathcal{P} \zeta = \pmb{\mathds{B}_a}(\zeta). \label{eq:tmp1}
		\end{align}
       At the same time, we know from linear algebra that
		\begin{align*}
			\pmb{\Psi}^\top(\zeta) [ \pmb{\mathcal{S}^+} \mathcal{Z}_2]^\top \mathcal{P} [ \pmb{\mathcal{S}^+} \mathcal{Z}_2] \pmb{\Psi}(\zeta) \leq \lambda_{\max}(\mathcal{P}) \big\vert [ \pmb{\mathcal{S}^+} \mathcal{Z}_2] \pmb{\Psi}(\zeta) \big\vert^2.
		\end{align*}
		As one has
		\begin{align*}
			\lambda_{\max}(\mathcal{P}) \big\vert [ \pmb{\mathcal{S}^+} \mathcal{Z}_2] \pmb{\Psi}(\zeta) \big\vert^2 \leq \lambda_{\max}(\mathcal{P}) \max_{\zeta \in \mathcal{X}^\zeta} \big\vert (\pmb{\mathcal{S}^+} \mathcal{Z}_2) \pmb{\Psi}(\zeta) \big\vert^2,
		\end{align*}
		we can claim that
		\begin{align}
			\pmb{\Psi}^\top(\zeta) [ \pmb{\mathcal{S}^+} \mathcal{Z}_2]^\top \mathcal{P} [ \pmb{\mathcal{S}^+} \mathcal{Z}_2] \pmb{\Psi}(\zeta) \leq \lambda_{\max}(\mathcal{P}) \max_{\zeta \in \mathcal{X}^\zeta} \big\vert (\pmb{\mathcal{S}^+} \mathcal{Z}_2) \pmb{\Psi}(\zeta) \big\vert^2\!. \label{eq:tmp2}
		\end{align}
		Hence, one can conclude that the fulfillment of conditions \eqref{eq: thm} and \eqref{eq: con_thm_con4}, considering \eqref{eq:tmp1} and \eqref{eq:tmp2}, implies that
		\begin{align*}
			\pmb{\mathds{B}_a}(\mathcal{A} \mathcal{F}(\zeta) + \mathcal{B}\vartheta) \leq \pmb{\mathds{B}_a}(\zeta) + c_a,
		\end{align*}
		with $c_a = \big(1 + \frac{1}{\varpi}\big)\lambda_{\max}(\mathcal{P}) \max_{\zeta \in \mathcal{X}^\zeta} \big\vert (\pmb{\mathcal{S}^+} \mathcal{Z}_2) \pmb{\Psi}(\zeta) \big\vert^2,$ as introduced in \eqref{eq: con_thm_con4}, which demonstrates that condition \eqref{eq: A-con_bar_3} is also satisfied.
		
		To complete the proof, we now proceed to demonstrate the satisfaction of conditions \eqref{eq: A-con_bar_1} and \eqref{eq: A-con_bar_2}. Since it is well-known that
		\begin{align*}
			\pmb{\mathds{B}_a}(\zeta) = \zeta^\top \mathcal{P} \zeta \leq \lambda_{\max}(\mathcal{P}) \vert \zeta \vert^2,
		\end{align*}
		and for all $\zeta \in \mathcal{X}^\zeta_{\mathrm{o}},$ we know that
		\begin{align*}
			\lambda_{\max}(\mathcal{P}) \vert \zeta \vert^2 \leq \lambda_{\max}(\mathcal{P}) \max_{\zeta \in \mathcal{X}^\zeta_{\mathrm{o}}} \vert \zeta \vert^2,
		\end{align*}
		one can deduce that
		\begin{align*}
			\pmb{\mathds{B}_a}(\zeta) \leq \lambda_{\max}(\mathcal{P}) \max_{\zeta \in \mathcal{X}^\zeta_{\mathrm{o}}} \vert \zeta \vert^2, \qquad \forall \zeta \in \mathcal{X}^\zeta_{\mathrm{o}},
		\end{align*}
		depicting the satisfaction of \eqref{eq: A-con_bar_1} with $\eta_a = \lambda_{\max}(\mathcal{P}) \max_{\zeta \in \mathcal{X}^\zeta_{\mathrm{o}}} \vert \zeta \vert^2\!,$ as introduced in \eqref{eq: con_thm_con5}. Likewise, one can conclude that \eqref{eq: A-con_bar_2} is satisfied with $\gamma_a = \lambda_{\min}(\mathcal{P}) \min_{\zeta \in \mathcal{X}^\zeta_{\mathrm{1}}} \vert \zeta \vert^2\!,$ as introduced in \eqref{eq: con_thm_con6}, concluding the proof.
	\end{proof}
	
	\begin{remark}[\normalfont\textbf{Control over $c_a$}]\label{remark_min}
		It is preferable to have $c_a$ as small as possible, as this extends the safety guarantee over a longer horizon; thus, minimizing $\Vert \pmb{\mathcal{S}^+} \mathcal{Z}_2 \Vert$ can be highly beneficial. To achieve this, we enforce the decision variable $\mathcal{Z}_2$ to satisfy the equality condition \eqref{eq: con_thm_con1} while minimizing $\Vert \pmb{\mathcal{S}^+} \mathcal{Z}_2 \Vert$, as shown in condition \eqref{eq: con_thm_min}. Simultaneously, minimizing $\lambda_{\max}(\mathcal{P})$ would directly contribute to a smaller $c_a$, as it appears in \eqref{eq: con_thm_con4}. However, directly minimizing it is challenging, as it is an NP-hard problem. Instead, one can introduce the constraint $\Pi \succeq \kappa \I_{n+m}$, where $\kappa \in \Rpz$, and maximize $\kappa$ while solving \eqref{eq: con_thm_con2} and \eqref{eq: con_thm_con3}. In this approach, $\lambda_{\min}(\Pi)$ is maximized, and since $\Pi^{-1} = \mathcal{P}$, meaning $\lambda_{\max}(\Pi^{-1}) = \lambda_{\max}(\mathcal{P}) = \frac{1}{\lambda_{\min}(\Pi)}$, it follows that $\lambda_{\max}(\mathcal{P})$ is minimized, thereby minimizing $c_a$ as well.
	\end{remark}
	
	\begin{remark}[\normalfont\textbf{On $\eta_a$ and $\gamma_a$}]\label{remark_eta_gamma}
		Even though the values proposed for $\eta_a$ and $\gamma_a$ in \eqref{eq: con_thm_con5} and \eqref{eq: con_thm_con6} are always valid, they are endowed with inherent conservatism. To maximize the distance between these values, they can be optimized. Specifically, $\eta_a$ can be designed as the maximum value that $\pmb{\mathds{B}_a}(\zeta)$ attains in the initial set $\mathcal{X}^\zeta_{\mathrm{o}}$ (see \eqref{eq: A-con_bar_1}), while $\gamma_a$ can be determined as the minimum value of $\pmb{\mathds{B}_a}(\zeta)$ in the unsafe set $\mathcal{X}^\zeta_{\mathrm{1}}$ (see \eqref{eq: A-con_bar_2}). This approach results in a straightforward optimization problem that may provide a stronger guarantee.
	\end{remark}

    \begin{remark}[\normalfont\textbf{Quadratic A-CBCs}]
		We highlight that our framework simplifies the problem of safety controller design by finding an A-CBC for just the linear part of A-dt-GNS since the nonlinear part is pushed within $c_a$ (cf. \eqref{eq: con_thm_con4} and under Remark \reff{rem:tmp}), with its effect getting minimized within the compact domain (cf. \eqref{eq: con_thm_min} and \eqref{eq: con_thm_con1}). It is well-defined that for linear systems, the best choice of A-CBCs (similar to Lyapunov functions) is quadratic, facilitating problem-solving and scalability.
	\end{remark}
	
	Having offered the paper's main theorem, we now present the required steps to design an A-CBC and its safety controller from data. Specifically, from \eqref{eq: con_thm_min} and \eqref{eq: con_thm_con1}, one can design $\mathcal{Z}_2$, while \eqref{eq: con_thm_con2} and \eqref{eq: con_thm_con3} yield $\Pi$ (and, subsequently, $\mathcal{P}$) as well as $\mathcal{Y}$. With $\mathcal{Z}_2$, $\mathcal{P}$, and $\mathcal{Y}$ determined, $c_a$ can be straightforwardly computed from \eqref{eq: con_thm_con4} (\emph{e.g.,} by utilizing \texttt{fmincon} in \textsc{Matlab}), and $\eta_a$ and $\gamma_a$ can be computed using Remark \reff{remark_eta_gamma}.
	
	A potential difficulty with this approach, however, is that it may result in a relatively large value for $c_a$, thus decreasing the safety time horizon $T \in \Np$ that must satisfy $T < \frac{\gamma_a - \eta_a}{c_a}$. This raises a valid question about exploring an alternative computational approach, where one could directly design $\mathcal{P}$ and $c_a$ by bounding the nonlinear part in $\pmb{\mathds{B}_a}(\mathcal{A} \mathcal{F}(\zeta) + \mathcal{B}\vartheta)$ (\emph{i.e.,} $\pmb{\Psi}^\top(\zeta) [ \pmb{\mathcal{S}^+} \mathcal{Z}_2]^\top \mathcal{P} [ \pmb{\mathcal{S}^+} \mathcal{Z}_2]\\ \pmb{\Psi}(\zeta)$). With the obtained $\mathcal{P}$, one can then design $\mathcal{Y}$ from \eqref{eq: con_thm_con2} and \eqref{eq: con_thm_con3}, while ensuring the minimum possible value for $c_a$. In this way, not only can an A-CBC be found (if one exists), but it also ensures the best possible guarantee. The subsequent section outlines this approach in detail.
	
	\section{Alternative computation for A-CBCs and safety controllers} \label{Section: bound}
	In this section, we focus on designing the matrix $\mathcal{P}$ and the constant $c_a$, followed by solving \eqref{eq: con_thm_con2} and \eqref{eq: con_thm_con3} to obtain $\mathcal{Y}$, the A-CBC, and its associated safety controller.
	Within this alternative computational approach, we assume that $\mathcal{Z}_2$ has already been obtained from \eqref{eq: con_thm_min} and \eqref{eq: con_thm_con1}. Therefore, the problem we aim to solve can be formulated as the following robust convex program (RCP):
	\begin{mini!}|l|[2]
		{\mathcal{P}, c_a, \mu}{\mu + c_a,}{\label{ROP}}\notag
		{}
		\addConstraint{\forall \zeta \in X^\zeta, \mathcal{P}\succ 0, c_a \in \Rpz, \mu \in (-\infty, 0], \varpi \in \Rp}{}
		\addConstraint{\big(1 + \frac{1}{\varpi}\big)\pmb{\Psi}^\top(\zeta) [ \pmb{\mathcal{S}^+} \mathcal{Z}_2]^\top \mathcal{P} [ \pmb{\mathcal{S}^+} \mathcal{Z}_2] \pmb{\Psi}(\zeta) - c_a}{\leq \mu,}\label{ROP1}
	\end{mini!}
   where $\varpi$ should be fixed a-priori, similar to \eqref{eq: con_thm_con3}.
	The optimal decision variables of the RCP \eqref{ROP} are shown by $\mu_R^*, c_{a_R}^*,$ and $\mathcal{P}_R^*$. It is clear that any feasible solution to the RCP \eqref{ROP} confirms the satisfaction of
	\begin{align}\label{New87}
		\text{$\big(1 + \frac{1}{\varpi}\big)$}\pmb{\Psi}^\top(\zeta) [ \pmb{\mathcal{S}^+} \mathcal{Z}_2]^\top \mathcal{P} [ \pmb{\mathcal{S}^+} \mathcal{Z}_2] \pmb{\Psi}(\zeta) \!\leq \! c_a,
	\end{align}
    with $\mathcal{P} \!\succ\! 0,$ and $c_a\! \in \! \Rpz,$
	while potentially minimizing $c_a$, which is the main goal of this alternative approach. In other words, the left-hand side in \eqref{New87} corresponds exactly to the last term in \eqref{New54}, which leads to $c_a$ in \eqref{eq: con_thm_con4} within the computational approach discussed in the previous section. Here, the primary goal is to bound this term by minimizing $c_a$ through the optimization problem in \eqref{ROP}, potentially leading to extended safety guarantees over time.
	
	The RCP \eqref{ROP}, however, presents a significant challenge, making it intractable. Specifically, due to the continuous nature of $\mathcal{X}^\zeta$ and the presence of $\pmb{\Psi}(\zeta)$ in \eqref{ROP1}, there are infinitely many constraints. To address this issue, we evaluate $\pmb{\Psi}(\zeta)$ at sample points $\zeta^d$, where $d \in \{1, 2, \dots, \mathcal{N}\}$, and $\mathcal{N}$ denotes the number of sample points for evaluation. Subsequently, we compute the maximum distance between $\zeta \in \mathcal{X}^\zeta$, and the assessment samples as
	\begin{align}
		\delta = \max_{\zeta} \min_{d} \vert \zeta - \zeta^d \vert, \quad \forall \zeta \in \mathcal{X}^\zeta. \label{eq: disc_param}
	\end{align}
	It should be noted that the maximum distance $\delta$ can be computed via grid-based partitioning of the space $\mathcal{X}^\zeta$.
	
	\begin{remark}[\normalfont\textbf{Dynamics Exclusion in Alternative Analysis}]
		We stress the fact that assessing $\pmb{\Psi}(\zeta)$ at sample points $\zeta^d$ does not require collecting data points from the system, meaning the system does not need to be actively running to obtain these samples. More precisely, since our analysis here is focused solely on the known state set $\mathcal{X}^\zeta$, we can select sample points $\zeta^d$ within this set. This approach is sufficient for proceeding with our alternative computational method.
	\end{remark}
	
	Leveraging the assessment samples $\zeta^d \in \mathcal{X}^\zeta, d \in \{1, 2, \dots, \mathcal{N}\}$, we introduce the following scenario convex program (SCP) that corresponds to the RCP \eqref{ROP}:
	\begin{mini!}|l|[2]
		{\mathcal{P}, c_a, \mu}{\mu + c_a,}{\label{SCP}}\notag
		{}
		\addConstraint{ \forall \zeta^d \in  \mathcal{X}^\zeta, \mathcal{P}\succ 0, c_a \in \Rpz, \mu \in (-\infty, 0], \varpi \in \Rp}{}
		\addConstraint{\forall d \in  \{1, 2, \dots, \mathcal{N}\}}{}\notag
		\addConstraint{\big(1 + \frac{1}{\varpi}\big)\pmb{\Psi}^\top(\zeta^d) [ \pmb{\mathcal{S}^+} \mathcal{Z}_2]^\top \mathcal{P} [ \pmb{\mathcal{S}^+} \mathcal{Z}_2] \pmb{\Psi}(\zeta^d) - c_a}{\leq \mu,}\label{SCP1}
	\end{mini!}
	where $\varpi$ should be selected a-priori. The optimal decision variables of the SCP \eqref{SCP} are shown by $\mu_S^*, c_{a_S}^*,$ and $\mathcal{P}_S^*$.
	One can readily deduce that the proposed SCP \eqref{SCP} contains a finite number of constraints, all of which have the same form as those in the RCP \eqref{ROP}.
	
	In the following subsections, we offer two different approaches to establish a relation between a feasible solution of the SCP \eqref{SCP} and that of the RCP \eqref{ROP}, while providing correctness guarantees.
	
	\subsection{Deterministic correctness guarantee}\label{subsec_det}
	Given the fact that the dictionary \eqref{eq: final_dict} contains smooth nonlinear functions, and $\mathcal{X}^\zeta$ is a compact domain, one can conclude that
	$\text{$\big(1 + \frac{1}{\varpi}\big)$} \pmb{\Psi}^\top(\zeta) [ \pmb{\mathcal{S}^+} \mathcal{Z}_2]^\top \mathcal{P} [ \pmb{\mathcal{S}^+} \mathcal{Z}_2] \pmb{\Psi}(\zeta)$ is always Lipschitz continuous with respect to $\zeta$, with a Lipschitz constant denoted as $\mathscr{L}$. In the following proposition, we demonstrate that, under a specific condition involving the Lipschitz constant $\mathscr{L}$, the maximum distance $\delta$ \eqref{eq: disc_param}, and the optimal solution $\mu^*_S$, a solution for the SCP is also valid for the RCP with a correctness guarantee.
	
	\begin{prop}[\normalfont\textbf{Deterministic Correctness Guarantee}] \label{thm_deterministic}
		Consider the SCP \eqref{SCP} with its solutions $\mathcal{P}_S^*, c_{a_S}^*,$ and $\mu_S^*$ obtained via $\mathcal{N}$ assessment samples. If
		\begin{align}
			\mu_S^* + \mathscr{L}\delta \leq 0, \label{eq: thm-conf1}
		\end{align}
		then the solutions of the SCP \eqref{SCP} are valid for the RCP \eqref{ROP} with a correctness guarantee.
	\end{prop}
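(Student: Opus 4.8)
The plan is to show that the optimizer $(\mathcal{P}_S^*, c_{a_S}^*, \mu_S^*)$ of the SCP \eqref{SCP}, which only enforces constraint \eqref{SCP1} at the finite assessment points $\zeta^d$, in fact satisfies the robust constraint \eqref{ROP1} at \emph{every} $\zeta \in \mathcal{X}^\zeta$ once the gap condition \eqref{eq: thm-conf1} holds. For brevity let me abbreviate the left-hand quantity, evaluated at the SCP-optimal matrix, as $g(\zeta) \coloneq \big(1 + \frac{1}{\varpi}\big)\pmb{\Psi}^\top(\zeta) [\pmb{\mathcal{S}^+} \mathcal{Z}_2]^\top \mathcal{P}_S^* [\pmb{\mathcal{S}^+} \mathcal{Z}_2]\pmb{\Psi}(\zeta)$. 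As already noted immediately above the proposition, since $\pmb{\Psi}$ is smooth on the compact set $\mathcal{X}^\zeta$ and $g$ is a quadratic form in $\pmb{\Psi}(\zeta)$, the map $g$ is Lipschitz continuous on $\mathcal{X}^\zeta$ with constant $\mathscr{L}$, which is the single analytic fact the argument leans on.

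First I would fix an arbitrary $\zeta \in \mathcal{X}^\zeta$ and invoke the covering-radius definition \eqref{eq: disc_param}. Because $\delta = \max_{\zeta}\min_{d}\vert \zeta - \zeta^d\vert$, for the chosen $\zeta$ there exists at least one assessment sample $\zeta^{d}$ with $\vert \zeta - \zeta^{d}\vert \leq \delta$. Applying the Lipschitz property of $g$ between $\zeta$ and this nearest sample then gives $g(\zeta) \leq g(\zeta^{d}) + \mathscr{L}\,\vert \zeta - \zeta^{d}\vert \leq g(\zeta^{d}) + \mathscr{L}\delta$.

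Next I would substitute SCP feasibility at the sample $\zeta^{d}$. By constraint \eqref{SCP1} evaluated at the optimizer, $g(\zeta^{d}) - c_{a_S}^* \leq \mu_S^*$, so combining with the previous bound yields $g(\zeta) - c_{a_S}^* \leq \mu_S^* + \mathscr{L}\delta$. Invoking the hypothesis \eqref{eq: thm-conf1}, namely $\mu_S^* + \mathscr{L}\delta \leq 0$, I obtain $g(\zeta) \leq c_{a_S}^*$; since $\zeta$ was arbitrary this holds for all $\zeta \in \mathcal{X}^\zeta$, which is exactly \eqref{New87} for the pair $(\mathcal{P}_S^*, c_{a_S}^*)$. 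To phrase the conclusion as genuine RCP feasibility, I would set $\mu_R \coloneq \mu_S^* + \mathscr{L}\delta$: by \eqref{eq: thm-conf1} one has $\mu_R \in (-\infty, 0]$, and the triple $(\mathcal{P}_S^*, c_{a_S}^*, \mu_R)$ then satisfies \eqref{ROP1} for every $\zeta \in \mathcal{X}^\zeta$, hence is feasible for the RCP \eqref{ROP}.

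This is in essence a standard scenario-to-robust conversion, so no step is genuinely difficult; the only points requiring care are that the constant $\mathscr{L}$ must be the Lipschitz constant of the specific quadratic form built with the SCP-optimal $\mathcal{P}_S^*$ (rather than a generic $\mathcal{P}$), and that the covering interpretation of $\delta$ in \eqref{eq: disc_param} does guarantee a sample within distance $\delta$ of each point of $\mathcal{X}^\zeta$. I would also note that the remaining RCP requirements, $\mathcal{P}_S^* \succ 0$ and $c_{a_S}^* \in \Rpz$, transfer verbatim from the SCP since those constraints are identical in the two programs, completing the validity claim.
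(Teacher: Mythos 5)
Your proposal is correct and follows essentially the same argument as the paper: pick the nearest assessment sample to an arbitrary $\zeta$, bound the deviation by the Lipschitz constant $\mathscr{L}$ times the covering radius $\delta$, invoke SCP feasibility at that sample, and conclude RCP feasibility with $\mu_R = \mu_S^* + \mathscr{L}\delta \leq 0$. Your additional remarks (that $\mathscr{L}$ is tied to the SCP-optimal $\mathcal{P}_S^*$ and that $\mathcal{P}_S^* \succ 0$, $c_{a_S}^* \in \Rpz$ carry over) are sound clarifications of points the paper leaves implicit, not a different route.
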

	
	\begin{proof}
		Let us define $\hat d \coloneq \arg\: \underset{d}{\min}$ $\vert \zeta^d - \zeta \vert$.
		Given the left-hand side of \eqref{ROP1} as
		\begin{align*}
			\text{$\big(1 + \frac{1}{\varpi}\big)$}\pmb{\Psi}^\top(\zeta) [ \pmb{\mathcal{S}^+} \mathcal{Z}_2]^\top \mathcal{P} [ \pmb{\mathcal{S}^+} \mathcal{Z}_2] \pmb{\Psi}(\zeta) - c_a,
		\end{align*}
		and by incorporating the term $$\text{$\big(1 + \frac{1}{\varpi}\big)$}\pmb{\Psi}^\top(\zeta^{\hat d}) [ \pmb{\mathcal{S}^+} \mathcal{Z}_2]^\top \mathcal{P} [ \pmb{\mathcal{S}^+} \mathcal{Z}_2] \text{$\pmb{\Psi}(\zeta^{\hat d})$} - c_a$$ through addition and subtraction, we have 
		\begin{align}\notag
			&\text{$\big(1 + \frac{1}{\varpi}\big)$} \pmb{\Psi}^{\!\top}\!(\zeta) [ \pmb{\mathcal{S}^+} \mathcal{Z}_2]^\top \mathcal{P} [ \pmb{\mathcal{S}^+} \mathcal{Z}_2] \pmb{\Psi}(\zeta) - c_a \\\notag
			& \hspace{0.3cm}+ \text{$\big(1 + \frac{1}{\varpi}\big)$} \pmb{\Psi}^\top(\zeta^{\hat d}) [ \pmb{\mathcal{S}^+} \mathcal{Z}_2]^\top \mathcal{P} [ \pmb{\mathcal{S}^+} \mathcal{Z}_2] \pmb{\Psi}(\zeta^{\hat d}) - c_a\\\notag
			& \hspace{0.3cm} - \text{$\big(1 + \frac{1}{\varpi}\big)$} \pmb{\Psi}^\top(\zeta^{\hat d}) [ \pmb{\mathcal{S}^+} \mathcal{Z}_2]^\top \mathcal{P} [ \pmb{\mathcal{S}^+} \mathcal{Z}_2] \pmb{\Psi}(\zeta^{\hat d}) + c_a\\\notag
			& = \text{$\big(1 + \frac{1}{\varpi}\big)$} \pmb{\Psi}^{\!\top}\!(\zeta) [ \pmb{\mathcal{S}^+} \mathcal{Z}_2]^\top \mathcal{P} [ \pmb{\mathcal{S}^+} \mathcal{Z}_2] \pmb{\Psi}(\zeta) \\\notag
			& \hspace{0.3cm}- \text{$\big(1 + \frac{1}{\varpi}\big)$} \pmb{\Psi}^\top(\zeta^{\hat d}) [ \pmb{\mathcal{S}^+} \mathcal{Z}_2]^\top \mathcal{P}  [ \pmb{\mathcal{S}^+} \mathcal{Z}_2] \pmb{\Psi}(\zeta^{\hat d}) \\\label{new87}
			& \hspace{0.3cm}+ \text{$\big(1 + \frac{1}{\varpi}\big)$} \pmb{\Psi}^\top(\zeta^{\hat d}) [ \pmb{\mathcal{S}^+} \mathcal{Z}_2]^\top \mathcal{P} [ \pmb{\mathcal{S}^+} \mathcal{Z}_2] \pmb{\Psi}(\zeta^{\hat d}) - c_a.
		\end{align}
		Since the first two terms in \eqref{new87} can be bounded using the Lipschitz constant $\mathscr{L}$, and $\mu_S^*$ can bound the last two terms according to \eqref{SCP1}, one has
		\begin{align*}
			& \text{$\big(1 + \frac{1}{\varpi}\big)$}\pmb{\Psi}^\top(\zeta) [ \pmb{\mathcal{S}^+} \mathcal{Z}_2]^\top \mathcal{P} [ \pmb{\mathcal{S}^+} \mathcal{Z}_2] \pmb{\Psi}(\zeta) - c_a \\
			&\leq \mathscr{L} \vert \zeta^{\hat d} - \zeta \vert + \mu_S^* \leq \mathscr{L} \min_{d} \vert \zeta^d - \zeta \vert + \mu_S^*\\
			& \leq \mathscr{L} \max_{\zeta} \: \min_{d} \vert \zeta^d - \zeta \vert + \mu_S^* \overset{\eqref{eq: disc_param}}{=} \mathscr{L} \delta + \mu_S^*\overset{\eqref{eq: thm-conf1}}{\leq} 0.
		\end{align*}
		Thus, one can deduce that condition \eqref{ROP1} is satisfied with $\mu_R^* = \mathscr{L} \delta + \mu_S^* \leq 0$, concluding the proof.
	\end{proof}
	
	We now provide some remarks on the results of Proposition \reff{thm_deterministic}. First, it should be noted that the proposed condition \eqref{eq: thm-conf1} must be verified a posteriori, meaning that the SCP \eqref{SCP} should be solved first, followed by a check to determine whether condition \eqref{eq: thm-conf1} is satisfied. If this condition is not met, the SCP \eqref{SCP} must be solved again with an increased number of assessment samples, thus reducing $\delta$ and potentially aiding in satisfying condition \eqref{eq: thm-conf1}. Second, since our aim is to satisfy condition \eqref{eq: thm-conf1}, we define the cost function as $\mu + c_a$, which does not necessarily yield the smallest possible $c_a$ but rather the smallest value of $c_a$ that minimizes $\mu + c_a$. Finally, it is worth noting that, since $\pmb{\Psi}(\zeta)$ is available and $\mathcal{Z}_2$ has already been determined, finding the Lipschitz constant $\mathscr{L}$ becomes a straightforward optimization problem in a compact domain once $\mathcal{P}$ is computed.
	
	In scenarios where condition~\eqref{eq: thm-conf1} is not satisfied, the results of the SCP are not valid for the RCP. In the following subsection, we present an alternative approach for minimizing $c_a$ that does not require this condition but at the cost of offering a probabilistic correctness guarantee.
	
	\subsection{Probabilistic correctness guarantee}\label{Sub3}
	Here, to obtain a potentially small $c_a$ without needing to satisfy condition~\eqref{eq: thm-conf1}, we adjust the RCP \eqref{ROP}, and its corresponding SCP \eqref{SCP} accordingly, by removing $\mu$ from the cost function to focus solely on $c_a$. The adjusted RCP is formulated as
	\begin{mini!}|l|[2]
		{\mathcal{P}, c_a}{c_a,}{\label{SCP_ad}}\notag
		{}
		\addConstraint{ \forall \zeta \in  \mathcal{X}^\zeta, \mathcal{P}\succ 0, c_a \in \Rpz, \varpi \in \Rp}{}
		\addConstraint{\big(1 + \frac{1}{\varpi}\big) \pmb{\Psi}^\top(\zeta) [ \pmb{\mathcal{S}^+} \mathcal{Z}_2]^\top \mathcal{P} [ \pmb{\mathcal{S}^+} \mathcal{Z}_2] \pmb{\Psi}(\zeta)}{\leq c_a,}\label{SCP1_ad}
	\end{mini!}
	where $\varpi$ should be chosen a-priori. This is then followed by defining two key parameters: (i) the violation parameter $\varepsilon \in (0, 1)$, and (ii) the confidence parameter $\beta \in (0, 1)$. The violation parameter $\varepsilon$ can also be interpreted as a \emph{risk probability}, indicating that one allows
	\begin{align*}
		\text{$\big(1 + \frac{1}{\varpi}\big)$} \pmb{\Psi}^\top(\zeta) [ \pmb{\mathcal{S}^+} \mathcal{Z}_2]^\top \mathcal{P} [ \pmb{\mathcal{S}^+} \mathcal{Z}_2] \pmb{\Psi}(\zeta) \!\leq \! c_a,
	\end{align*}
    with $\mathcal{P} \!\succ\! 0,$ and $c_a\! \in \! \Rpz,$
	to hold throughout the state set $\mathcal{X}^\zeta$, except for a fraction $\varepsilon$. This means that $\varepsilon$ is an upper bound on the probability that
	\begin{align*}
		\text{$\big(1 + \frac{1}{\varpi}\big)$}\pmb{\Psi}^\top(\zeta) [ \pmb{\mathcal{S}^+} \mathcal{Z}_2]^\top \mathcal{P} [ \pmb{\mathcal{S}^+} \mathcal{Z}_2] \pmb{\Psi}(\zeta) \nleq  c_a,
	\end{align*}
	or equivalently,
	\begin{align*}
		\mathrm{Pr} \big[\text{$\big(1 + \frac{1}{\varpi}\big)$}\pmb{\Psi}^\top(\zeta) [ \pmb{\mathcal{S}^+} \mathcal{Z}_2]^\top \mathcal{P} [ \pmb{\mathcal{S}^+} \mathcal{Z}_2] \pmb{\Psi}(\zeta) \leq  c_a\big] \geq 1 - \varepsilon,
	\end{align*}
	where $\mathrm{Pr}(\cdot)$ denotes the probability of an event.
	The second parameter, $\beta$, represents the confidence that the risk will not exceed the specified level $\varepsilon$. Once these parameters are set, the number of i.i.d. assessment samples $\mathcal{N}$, needed to design $\mathcal{P}$ and $c_a$, can be determined based on the following theorem, borrowed from \cite{campi2009scenario}.
	
	\begin{thm}[\normalfont\textbf{Probabilistic Correctness Guarantee}]\label{thm: prob}
		Consider a violation parameter $\varepsilon \in (0, 1)$ and a confidence parameter $\beta \in (0, 1)$. If one chooses
		\begin{align}
			\mathcal{N} \geq \frac{2}{\varepsilon} \left(\ln \left(\frac{1}{\beta}\right) + \frac{(n+m)(n+m+1)}{2} + 1\right)\! \label{eq: number}
		\end{align}
		and solves the adjusted RCP \eqref{SCP_ad} using these samples, then it holds over $\mathcal{X}^\zeta$ with the following probability:
		\begin{align}
			\mathrm{Pr}&\Big[\mathrm{Pr} \big[\text{$\big(1 + \frac{1}{\varpi}\big)$}\pmb{\Psi}^\top(\zeta) [ \pmb{\mathcal{S}^+} \mathcal{Z}_2]^\top \mathcal{P} [ \pmb{\mathcal{S}^+} \mathcal{Z}_2] \pmb{\Psi}(\zeta) \leq  c_a\big] \geq 1 - \varepsilon\Big] \notag \\ & 
			\text{$\geq 1 - \beta.$} \label{eq: prob_s}
		\end{align}
	\end{thm}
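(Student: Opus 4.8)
The plan is to recognize the adjusted RCP \eqref{SCP_ad} as a \emph{convex scenario program} and to invoke the scenario optimization machinery of \cite{campi2009scenario} directly, since the statement is a specialization of that result to our particular decision-variable structure rather than a self-contained argument.

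First I would establish convexity. For a fixed $\varpi \in \Rp$ (chosen a priori, precisely so as to avoid any bilinearity) and a fixed realization $\zeta$, the constraint function
\begin{align*}
\big(1 + \frac{1}{\varpi}\big)\pmb{\Psi}^\top(\zeta) [ \pmb{\mathcal{S}^+} \mathcal{Z}_2]^\top \mathcal{P} [ \pmb{\mathcal{S}^+} \mathcal{Z}_2] \pmb{\Psi}(\zeta) - c_a
\end{align*}
is \emph{affine} in the decision variables $(\mathcal{P}, c_a)$: it is linear in the entries of $\mathcal{P}$ and linear in the scalar $c_a$. Together with the linear objective $c_a$ and the convex cone constraint $\mathcal{P} \succ 0$, this renders the adjusted RCP a convex program. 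Its decision variables live in a space whose dimension equals the number of free entries of the \emph{symmetric} matrix $\mathcal{P} \in \R^{(n+m)\times(n+m)}$, namely $\frac{(n+m)(n+m+1)}{2}$, augmented by one for the scalar $c_a$, giving $d = \frac{(n+m)(n+m+1)}{2} + 1$.

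Next I would treat each assessment sample $\zeta^d$, $d \in \{1,\dots,\mathcal{N}\}$, as an i.i.d.\ draw instantiating one scenario constraint of the form \eqref{SCP1_ad}, so that the resulting sampled program is exactly the scenario counterpart of the RCP \eqref{SCP_ad}. With convexity and the dimension count in hand, I would apply the scenario theorem of \cite{campi2009scenario}: for a convex scenario program with $d$ decision variables, choosing $\mathcal{N} \geq \frac{2}{\varepsilon}\big(\ln\frac{1}{\beta} + d\big)$ guarantees that, with confidence at least $1-\beta$, the optimal scenario solution violates the original robust constraint with probability at most $\varepsilon$. Substituting $d = \frac{(n+m)(n+m+1)}{2} + 1$ reproduces verbatim the sample bound \eqref{eq: number} and the nested probabilistic statement \eqref{eq: prob_s}.

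The main obstacle is not a calculation but the bookkeeping underlying the bound: one must verify that the effective number of decision variables is exactly $\frac{(n+m)(n+m+1)}{2} + 1$, where the $+1$ stems precisely from the scalar $c_a$, and confirm that fixing $\varpi$ a priori is what keeps the program convex, since an unfixed $\varpi$ would reintroduce a product term with $\mathcal{P}$. One should also invoke the standing feasibility and nondegeneracy assumptions of the Campi--Garatti result, under which the stated confidence bound holds.
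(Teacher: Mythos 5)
Your proposal is correct and matches the paper's treatment: the paper does not give a self-contained proof but borrows this theorem directly from \cite{campi2009scenario}, justifying the sample bound \eqref{eq: number} only by counting the decision variables (the $\frac{(n+m)(n+m+1)}{2}$ free entries of the symmetric $\mathcal{P}$ plus one for $c_a$), exactly as you do. Your added verification that the constraint \eqref{SCP1_ad} is affine in $(\mathcal{P}, c_a)$ for fixed $\varpi$ and $\zeta$, and your note on the standing feasibility and nondegeneracy assumptions of the Campi--Garatti result, are the right hypotheses to check when invoking that theorem.
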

	
	\begin{remark}[\normalfont\textbf{On condition \eqref{eq: number}}]
		Note that in \eqref{eq: number}, $ \frac{(n+m)(n+m+1)}{2} + 1$ represents the number of decision variables in \eqref{SCP_ad}. Specifically, since $P \in \R^{(n+m) \times (n+m)}$ is a symmetric matrix, it contributes $\frac{(n+m)(n+m+1)}{2}$ decision variables, while $c_a \in \Rpz$ adds one additional decision variable. Hence, $ \frac{(n+m)(n+m+1)}{2} + 1$ is the total of decision variables in RCP \eqref{SCP_ad}.
	\end{remark}
	
	Having presented Theorem \reff{thm: prob}, which provides a probabilistic correctness guarantee, we now proceed with the following result, extending this probability to the A-CBC and its associated safety controller.
	
	\begin{coro}[\normalfont\textbf{Probabilistic A-CBCs and safety controllers}]
		Consider the unknown A-dt-GNS $\Upsilon_a$ as per \eqref{eq: final_sys} with its closed-loop data-based representation $\mathcal{A} \mathcal{F}(\zeta) + \mathcal{B}\vartheta = \pmb{ \mathcal{S}^+} \mathcal{Z}_1 \zeta + \pmb{\mathcal{S}^+} \mathcal{Z}_2 \pmb{\Psi}(\zeta)$ as per Lemma \reff{lemma 1}. Let $\mathcal{Z}_2$ be obtained from \eqref{eq: con_thm_min},\eqref{eq: con_thm_con1}, and $\mathcal{P}$ and $c_a$ are determined according to Theorem \reff{thm: prob}. Subsequently, if $\mathcal{Y}$ and $\Pi = \mathcal{P}^{-1}$ can be designed from \eqref{eq: con_thm_con2},\eqref{eq: con_thm_con3}, then $\pmb{\mathds{B}_a}(\zeta) = \zeta^\top \mathcal{P} \zeta$ is an A-CBC for the A-dt-GNS and $\vartheta = \pmb{\mathcal{I}} \begin{bmatrix}
			\mathcal{Y} \mathcal{P} & \mathcal{Z}_2
		\end{bmatrix} \mathcal{F}(\zeta)$ is its corresponding safety controller with a confidence of at least $1-\beta$.
	\end{coro}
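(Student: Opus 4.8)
The plan is to show that this corollary follows by grafting the probabilistic bound of Theorem \reff{thm: prob} onto the deterministic machinery already developed in the proof of Theorem \reff{thm: main}, the crucial observation being that the \emph{only} stochastic ingredient is the bound on the nonlinear term. First I would invoke Lemma \reff{lemma 1}: setting $\mathcal{Z}_1 \coloneq \mathcal{Y}\mathcal{P}$ and using $\mathcal{P} = \Pi^{-1}$, conditions \eqref{eq: con_thm_con1} and \eqref{eq: con_thm_con2} together reconstruct the partition identity \eqref{eq: Lemma-con}, so that the closed-loop data-based representation \eqref{eq: Lemma-closed} holds under the control law \eqref{eq: Lemma-control}. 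This identifies the designed controller $\vartheta = \pmb{\mathcal{I}}\begin{bmatrix}\mathcal{Y}\mathcal{P} & \mathcal{Z}_2\end{bmatrix}\mathcal{F}(\zeta)$ with $\mathcal{K}\mathcal{F}(\zeta)$ and supplies the representation needed to evaluate $\pmb{\mathds{B}_a}(\mathcal{A}\mathcal{F}(\zeta)+\mathcal{B}\vartheta)$.

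Next I would replay the decomposition from the proof of Theorem \reff{thm: main}: expanding $\pmb{\mathds{B}_a}(\mathcal{A}\mathcal{F}(\zeta)+\mathcal{B}\vartheta)$ through \eqref{eq: Lemma-closed} and applying Cauchy--Schwarz followed by Young's inequality to the cross term yields exactly \eqref{New54}, i.e. the sum of a purely quadratic ``linear-part'' term in $\zeta$ and the nonlinear term $\big(1+\frac{1}{\varpi}\big)\pmb{\Psi}^\top(\zeta)[\pmb{\mathcal{S}^+}\mathcal{Z}_2]^\top \mathcal{P}[\pmb{\mathcal{S}^+}\mathcal{Z}_2]\pmb{\Psi}(\zeta)$. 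For the linear part, condition \eqref{eq: con_thm_con3} together with a Schur-complement argument gives, deterministically, the bound \eqref{eq:tmp1}, namely $(1+\varpi)\zeta^\top \mathcal{P}[\pmb{\mathcal{S}^+}\mathcal{Y}]^\top \mathcal{P}[\pmb{\mathcal{S}^+}\mathcal{Y}]\mathcal{P}\zeta \leq \pmb{\mathds{B}_a}(\zeta)$. This step carries no probabilistic content and is identical to Theorem \reff{thm: main}.

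The key and only probabilistic step is to bound the nonlinear term. Since $\mathcal{P}$ and $c_a$ are here obtained by solving the adjusted program \eqref{SCP_ad} with $\mathcal{N}$ assessment samples satisfying \eqref{eq: number}, Theorem \reff{thm: prob} guarantees that $\big(1+\frac{1}{\varpi}\big)\pmb{\Psi}^\top(\zeta)[\pmb{\mathcal{S}^+}\mathcal{Z}_2]^\top \mathcal{P}[\pmb{\mathcal{S}^+}\mathcal{Z}_2]\pmb{\Psi}(\zeta)\leq c_a$ holds over $\mathcal{X}^\zeta$ with confidence at least $1-\beta$. Substituting this together with the deterministic bound \eqref{eq:tmp1} into \eqref{New54} gives $\pmb{\mathds{B}_a}(\mathcal{A}\mathcal{F}(\zeta)+\mathcal{B}\vartheta)\leq \pmb{\mathds{B}_a}(\zeta)+c_a$, which is precisely the decrease condition \eqref{eq: A-con_bar_3}. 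The remaining barrier conditions \eqref{eq: A-con_bar_1} and \eqref{eq: A-con_bar_2} hold deterministically, exactly as in Theorem \reff{thm: main} (or, with a tighter separation, via Remark \reff{remark_eta_gamma}), so all three requirements of Definition \reff{def: A-CBC} are met and $\pmb{\mathds{B}_a}(\zeta)=\zeta^\top \mathcal{P}\zeta$ is an A-CBC with the stated associated controller.

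I expect the main (and essentially the only) subtlety to be the correct bookkeeping of the confidence level. Because the decrease condition \eqref{eq: A-con_bar_3} can fail \emph{only} on the event where the nonlinear-term bound of Theorem \reff{thm: prob} fails, while the linear-part bound \eqref{eq:tmp1} and the conditions \eqref{eq: A-con_bar_1}--\eqref{eq: A-con_bar_2} are deterministic, the confidence $1-\beta$ propagates to the entire A-CBC \emph{without any union bound or compounding}. Making this ``no-compounding'' argument explicit---that the designed $\mathcal{P}$, $\mathcal{Y}$, and $c_a$ inherit verbatim the guarantee of Theorem \reff{thm: prob}---is the crux of the argument, the rest being a direct transcription of the proof of Theorem \reff{thm: main}.
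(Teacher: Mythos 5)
Your proposal is correct and, in substance, it is the same argument as the paper's: the paper's proof is pure probability bookkeeping --- it defines $\mathcal{E}_1$ as the (deterministic, hence probability-one) event that $\mathcal{Z}_2$, $\mathcal{Y}$, $\Pi=\mathcal{P}^{-1}$ can be designed from \eqref{eq: con_thm_min}--\eqref{eq: con_thm_con3}, defines $\mathcal{E}_2$ as the inner probability statement of \eqref{eq: prob_s}, and applies $\mathrm{Pr}[\mathcal{E}_1\cap\mathcal{E}_2]\geq 1-\mathrm{Pr}[\bar{\mathcal{E}}_1]-\mathrm{Pr}[\bar{\mathcal{E}}_2]\geq 1-\beta$. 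Your ``no-compounding'' observation is exactly this degenerate union bound ($\mathrm{Pr}[\bar{\mathcal{E}}_1]=0$), just phrased without naming events; and your replay of the Theorem \reff{thm: main} machinery (Lemma \reff{lemma 1}, Cauchy--Schwarz/Young giving \eqref{New54}, Schur complement giving \eqref{eq:tmp1}) makes explicit what the paper leaves implicit in the phrase ``all required conditions \ldots are satisfied.''

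One imprecision is worth flagging: you write that Theorem \reff{thm: prob} guarantees the nonlinear bound ``holds over $\mathcal{X}^\zeta$ with confidence at least $1-\beta$,'' but the scenario guarantee \eqref{eq: prob_s} is nested --- with confidence $1-\beta$, the bound holds only up to a violation set of measure at most $\varepsilon$ in $\mathcal{X}^\zeta$, not pointwise everywhere. Consequently the decrease condition \eqref{eq: A-con_bar_3} is only certified (with confidence $1-\beta$) outside an $\varepsilon$-fraction of the state set; the paper keeps this explicit by defining $\mathcal{E}_2$ as the full inner probability statement rather than as a pointwise bound. Since the corollary's own conclusion suppresses $\varepsilon$ as well, this does not invalidate your route, but your intermediate claim is strictly stronger than what Theorem \reff{thm: prob} delivers and should be restated in the nested form.
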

	
	\begin{proof}
		Let $\mathcal{E}_1$ be the event that $\mathcal{Z}_2$, $\mathcal{Y},$ and $\Pi = \mathcal{P}^{-1}$ can be designed from the satisfaction of  \eqref{eq: con_thm_min}-\eqref{eq: con_thm_con3}. Additionally, let $\mathcal{E}_2$ represent the event \begin{align*}
			\mathrm{Pr} \big[\text{$\big(1 + \frac{1}{\varpi}\big)$}\pmb{\Psi}^\top(\zeta) [ \pmb{\mathcal{S}^+} \mathcal{Z}_2]^\top \mathcal{P} [ \pmb{\mathcal{S}^+} \mathcal{Z}_2] \pmb{\Psi}(\zeta) \leq  c_a\big] \geq 1 - \varepsilon,
		\end{align*} \emph{i.e., } the inner probability expression in \eqref{eq: prob_s}. We have $\mathrm{Pr}\big[\mathcal{\bar {\mathcal{E}}}_1\big]=0$ as $\mathcal{E}_1$ is a deterministic event and always holds true, and  $\mathrm{Pr}\big[\mathcal{\bar {\mathcal{E}}}_2\big]\leq \beta,$ where $\mathcal{\bar {\mathcal{E}}}_1$ and $\mathcal{\bar {\mathcal{E}}}_2$ are the complement of $\mathcal{E}_1$ and $\mathcal{E}_2$, respectively. Our goal is to compute the probability of the simultaneous occurrence of events  $\mathcal E_1$ and $\mathcal{E}_2$ as
		\begin{align*}
			\mathrm{Pr}\big[\mathcal E_1\cap \mathcal{E}_2\big]=1-\mathrm{Pr}\big[\mathcal{\bar {\mathcal{E}}}_1\cup \mathcal{\bar {\mathcal{E}}}_2\big].
		\end{align*}
		Since $\mathrm{Pr}\big[\mathcal{\bar {\mathcal{E}}}_1\cup \mathcal{\bar {\mathcal{E}}}_2\big]\leq\mathrm{Pr}\big[\mathcal{\bar {\mathcal{E}}}_1\big]+\mathrm{Pr}\big[\mathcal{\bar {\mathcal{E}}}_2\big]$,
		we have 
		\begin{align}\label{Eq:18}
			&\mathrm{Pr}\big[\mathcal E_1\cap \mathcal{E}_2\big]\geq 1-\mathrm{Pr}\big[\mathcal{\bar {\mathcal{E}}}_1\big]-\mathrm{Pr}\big[\mathcal{\bar {\mathcal{E}}}_2\big]
			\geq 1-\beta.
		\end{align}
		Hence, all required conditions for the design of $\pmb{\mathds{B}_a}(\zeta) = \zeta^\top \mathcal{P} \zeta$ as an A-CBC for the A-dt-GNS and $\vartheta = \pmb{\mathcal{I}} \begin{bmatrix} \mathcal{Y} \mathcal{P} & \mathcal{Z}_2 \end{bmatrix} \mathcal{F}(\zeta)$ as its corresponding safety controller are satisfied with a confidence level of at least $1 - \beta$, thereby concluding the proof.
	\end{proof}
	
	\section{Simulation Results}\label{Section: Simul}
    In this section, we shed light on the efficacy of our data-driven framework through two case studies: (i) a highly nonlinear system and (ii) a relatively high-dimensional system, highlighting the scalability of our approach.
    
	\textbf{First Case Study.} We demonstrate the effectiveness of our data-driven methodology in ensuring the safety of a highly nonlinear example while enforcing input constraints. The dt-GNS evolves according to 
	\begin{align}
		\Upsilon : \begin{cases}
			x^+_1 = x_1 + \tau\big(x_2 + \sin(u)\big),\\
			x^+_2 = x_2 + \tau\big(-x_1 + x_2 - \ln\big(1+x_1^2\big) + u\big),
		\end{cases} \label{eq: ex1_sys}
	\end{align}
	where $\tau = 0.01$ is the sampling time. Moreover, $\mathcal{C}_1 = -\frac{1}{15}$ and $\mathcal{C}_2 = \frac{1}{15}$ in \eqref{eq: input_const}, which implies that the control input is constrained by $-15 \leq u \leq 15$. We assume that the dt-GNS \eqref{eq: ex1_sys} is unknown, and just the dictionary
	\begin{align}
		f(x, u) = &\left[\begin{array}{cccccc}
			\!\!\! x_1 & x_2 & u & \ln(1+x_1^2) & \ln(1+x_2^2) & \ln(1+u^2)
		\end{array}\right.\notag\\
		&\left. \begin{array}{cccc}
			\!\!\cos(x_1) & \cos(x_2) & \cos(u) & \sin(u)
		\end{array}\!\!\!\right]^{\! \top} \label{dic1}
	\end{align}
\emph{with irrelevant terms} is available based on the insight into the system. The dt-GNS in \eqref{eq: ex1_sys} can be reformulated in the form of \eqref{eq: gen_disc_rewritten} using the dictionary \eqref{dic1} and the matrix $A$ as
\begin{align}
	A = \begin{bmatrix}
		1 & \tau & 0 & 0 & 0 & 0 & 0 & 0 & 0 & \tau\\
		-\tau & 1+\tau & \tau & -\tau & 0 & 0 & 0 & 0 & 0 & 0 
	\end{bmatrix}\!, \label{app1}
\end{align}
however, this matrix is unknown to us. Note that if the dictionary \eqref{dic1} changes, this unknown matrix $A$ will also change correspondingly to match the dt-GNS in \eqref{eq: ex1_sys}.

The regions of interest are given as $\mathcal{X} = [-5, 5]^2, \mathcal{X}_{\mathrm{o}} = [-1, 1]^2,$ and $\mathcal{X}_1 = [-5, -3]^2 \cup [3, 5]^2$. Now, according to the procedure introduced in Subsection \reff{subsec:a}, we construct the A-dt-GNS $\Upsilon_a$. Defining $\zeta_{1,1} \coloneq x_1, \zeta_{1, 2} \coloneq x_2,$ and $\zeta_2 \coloneq u$, and considering $\vartheta$ as the virtual control input, the A-dt-GNS is obtained as
	\begin{align}
		\Upsilon_a : \begin{cases}
			\zeta_{1,1}^+ = \zeta_{1,1} + \tau\big(\zeta_{1,2} + \sin(\zeta_2)\big),\\
			\zeta_{1,2}^+ = \zeta_{1,2} + \tau\big(-\zeta_{1,1} + \zeta_{1,2} - \ln\big(1+\zeta_{1,1}^2\big) + \zeta_2\big),\\
			\zeta_2^+ = \vartheta,
		\end{cases} \label{eq: ex1_sys_aug}
	\end{align}
	with the dictionary
	\begin{align}
		\mathcal{F}(\zeta) = &\left[\begin{array}{cccccc}
			\!\!\! \zeta_{1,1} & \zeta_{1,2} & \zeta_2 & \ln(1+\zeta_{1,1}^2) & \ln(1+\zeta_{1,2}^2) & \ln(1+\zeta_2^2)
		\end{array}\right.\notag\\
		&\left. \begin{array}{cccc}
			\!\!\cos(\zeta_{1,1}) & \cos(\zeta_{1,2}) & \cos(\zeta_2) & \sin(\zeta_2)
		\end{array}\!\!\!\right]^{\! \top}\!\!\!\!, \label{dic2}
	\end{align}
	where $\zeta = \left[\begin{array}{ccc} \!\!\! \zeta_{1,1} & \zeta_{1,2} & \zeta_2\end{array}\!\!\!\right]^{\! \top}$. The augmented system in \eqref{eq: ex1_sys_aug} can be recast in the form of \eqref{eq: final_sys}, with matrices $\mathcal{A}$ and $\mathcal{B}$ as
	\begin{align}
		\mathcal{A} \!=\! \begin{bmatrix}
			1 & \tau & 0 & 0 & 0 & 0 & 0 & 0 & 0 & \tau\\
			-\tau & 1+\tau & \tau & -\tau & 0 & 0 & 0 & 0 & 0 & 0 \\
			0 & 0 & 0 & 0 & 0 & 0 & 0 & 0 & 0 & 0
		\end{bmatrix}\!,~
		\mathcal{B} \!=\! \begin{bmatrix}
			0\\
			0\\
			1
		\end{bmatrix}\!. \label{app3}
	\end{align}
    We stress the fact that both the dt-GNS \eqref{eq: ex1_sys} and A-dt-GNS \eqref{eq: ex1_sys_aug} are assumed to be unknown (\emph{i.e.,} matrices $A$ and $\mathcal A$), and we just have the dictionary \eqref{dic1} which results in constructing \eqref{dic2}. We now proceed with building the sets of the A-dt-GNS \eqref{eq: ex1_sys_aug}. Considering $\epsilon_1 = \frac{1}{1500}$ and $\epsilon_2 = \frac{1499}{1500}$ as arbitrary parameters (cf. Remark \reff{remark 2}), the sets of A-dt-GNS are constructed as $\mathcal{X}^\zeta \coloneq [-5, 5]^2 \times [-15.01, 15.01],$ $\mathcal{X}^\zeta_{\mathrm{o}} \coloneq [-1, 1]^2 \times [-0.01, 0.01],$ and $\mathcal{X}^\zeta_{\mathrm{1}} \coloneq [-5, -3]^2 \times [-15.01, 15.01] \cup [3, 5]^2 \times [-15.01, 15.01] \cup [-5, 5]^2 \times [-15.01, -15] \cup [-5, 5]^2 \times [15, 15.01]$. Now, the primary objective is to design an A-CBC and its corresponding safety controller for the A-dt-GNS \eqref{eq: ex1_sys_aug} to guarantee its safety and, consequently, according to Remark \reff{Re_New}, to guarantee the safety of dt-GNS \eqref{eq: ex1_sys}, while enforcing its input constraint.
	
		\begin{figure}[t!]
		\centering
		\includegraphics[width=0.8\linewidth]{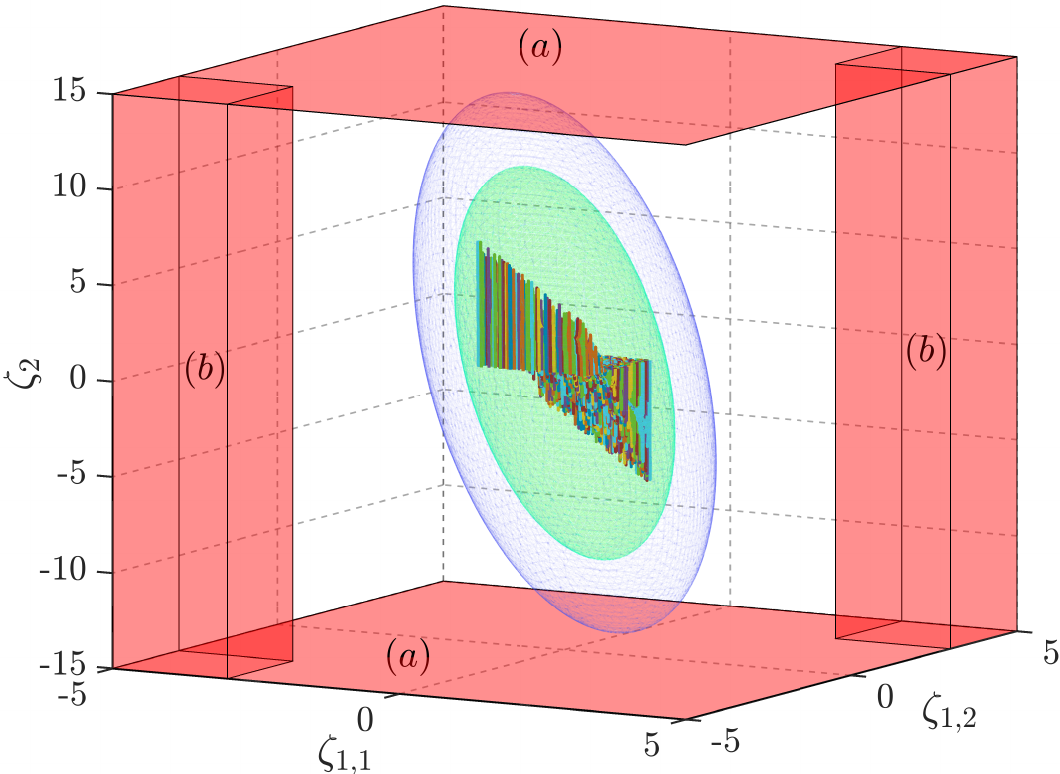}
		\caption{Starting from various initial conditions in the initial set $\mathcal{X}^\zeta_{\mathrm{o}}$, $1000$ state and control input trajectories of the unknown dt-GNS \eqref{eq: ex1_sys} are generated using the designed dynamic safety controller \eqref{des_u}. As illustrated, neither the states enter the unsafe set nor is the control input constraint violated, \emph{i.e.,} neither of them enters the unsafe sets \protect \redsquare. The unsafe sets marked by $(a)$ are to ensure the input constraints are met, while the unsafe sets labeled by $(b)$ are to ensure the safety of the system's states. Moreover, $\pmb{\mathds{B}_a} (\zeta) = \eta_a$ and $\pmb{\mathds{B}_a} (\zeta) = \gamma_a$ are indicated by colors \protect \greensquare ~ and \protect \bluesquare, respectively. }
		\label{fig:states}
	\end{figure}
	
	Having completed these steps, we now collect input-state data as in \eqref{eq: data} over the time horizon $\mathcal{T} = 11$ ($\mathcal{T} \geq N + 1$ with $N = 10$). We also set $\varpi = 0.01.$ We first examine the initial computational approach proposed in Section \reff{Section: Data-driven-disc}. After solving SDPs \eqref{eq: thm}, we obtain
	\begin{align*}
		\mathcal{P} = \begin{bmatrix}
			20.1130 & 12.1850 & 1.2337\\ 12.1850 & 22.0934 & 2.7004\\ 1.2337 & 2.7004 & 0.8938
		\end{bmatrix}\!.
	\end{align*}
	Therefore, the A-CBC is designed as
	\begin{align*}
		\pmb{\mathds{B}_a}(\zeta) = & \; 20.1130 \zeta_{1,1}^2 + 22.0934 \zeta_{1,2}^2 + 0.8938 \zeta_2^2 + 24.3700 \zeta_{1,1} \zeta_{1,2}\\
		& + 2.4674 \zeta_{1,1} \zeta_2 + 5.4008 \zeta_{1,2} \zeta_2,
	\end{align*}
	with $c_a = 3.9385$ being obtained from \eqref{eq: con_thm_con4}, and $\eta_a = 66.6553, \gamma_a = 125.7459,$ where $\gamma_a > \eta_a$, being acquired according to Remark \reff{remark_eta_gamma}. Hence, the virtual static safety controller $\vartheta$ is designed as
	\begin{align*}
		\vartheta = \begin{bmatrix}
			\! -2.0652  &  \! -4.7309  &  \! -0.6352 & \!0 & \!0 & \!0 & \!0 & \!0 & \!0 & \!0
		\end{bmatrix} \mathcal{F}(\zeta),
	\end{align*}
	or equivalently, the dynamic safety controller of the dt-GNS \eqref{eq: ex1_sys} is computed as
	\begin{align}
		u^+ \! = \!  \begin{bmatrix}
			\! -2.0652  &  \! -4.7309  &  \! -0.6352 & \!0 & \!0 & \!0 & \!0 & \!0 & \!0 & \!0
		\end{bmatrix} \! f(x, u). \label{des_u}
	\end{align}
	With the A-CBC, its corresponding level sets, and $c_a$ obtained, we can guarantee, according to Corollary \reff{Corollary 1}, that the A-dt-GNS \eqref{eq: ex1_sys_aug} remains safe over the time horizon of $T = 15$. This implies that the dt-GNS \eqref{eq: ex1_sys} is also safe over the specified time horizon, while its input constraint is enforced. This implementation was performed using Mosek in \textsc{Matlab} on a MacBook Pro (Apple M2 Max), completing in $0.24$ seconds.
	Figure \reff{fig:states} depicts the state and control input trajectories over the guaranteed time horizon. Moreover, to verify the correctness of our results, we assume access to the matrix $\mathcal{A}$ and check the satisfaction of condition \eqref{eq: A-con_bar_3}, as shown in Figure \reff{fig:con3}.
	
	\begin{figure}[t!]
		\centering
		\includegraphics[width=0.8\linewidth]{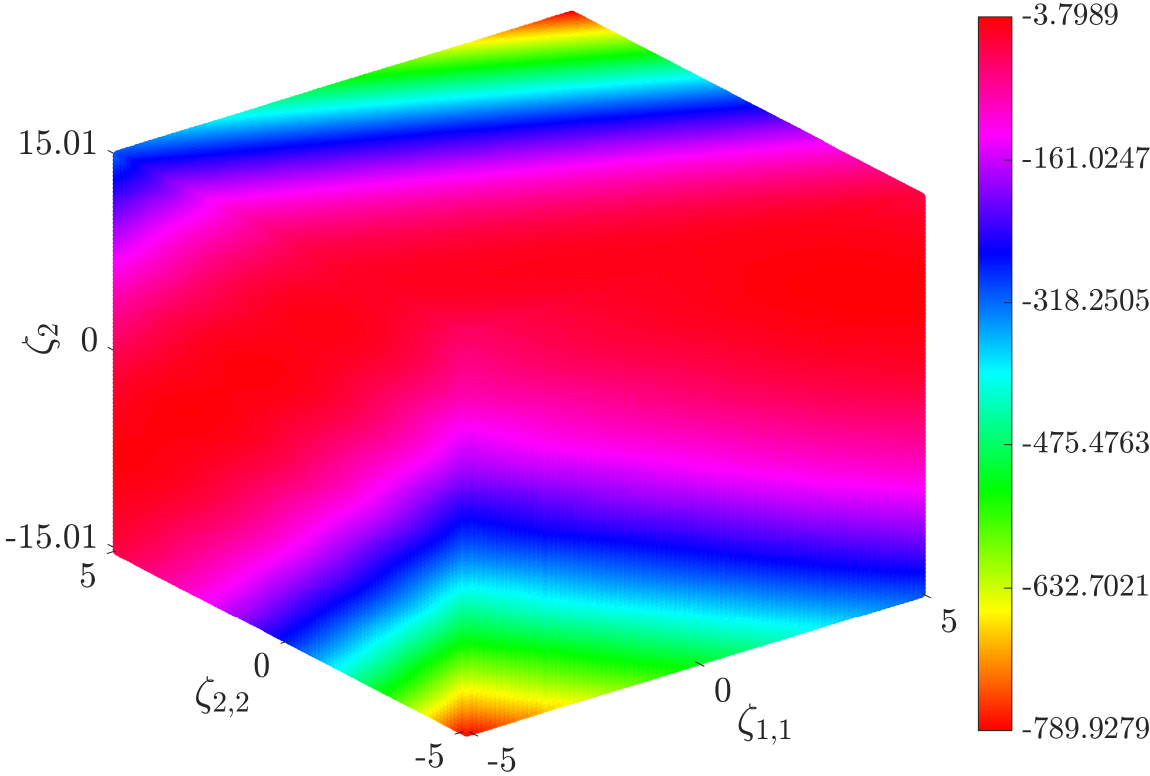}
		\caption{The heatmap shows the values of $\pmb{\mathds{B}_a}(\mathcal{A} \mathcal{F}(\zeta) + \mathcal{B}\vartheta) - \pmb{\mathds{B}_a}(\zeta) - c_a$, with $\mathcal{A}$ and $\mathcal{B}$ reported in \eqref{app3}, over the state set considered. As can be seen, these values are negative all over the state set, which means that condition \eqref{eq: A-con_bar_3} is fulfilled.}
		\label{fig:con3}
	\end{figure}
	
	To further boost the guarantee, we redesign the A-CBC and its associated safety controller using the alternative probabilistic method in Subsection \reff{Sub3}. We first set the violation parameter $\varepsilon = 0.01$ and confidence parameter $\beta = 10^{-10}$ (with $1 - \beta$ being very close to $1$). According to \eqref{eq: number}, the required assessment samples are computed as $\mathcal{N} = 6006$. After obtaining $\mathcal{Z}_2$ from \eqref{eq: con_thm_min} and \eqref{eq: con_thm_con1}, we solve the SCP that corresponds to the adjusted RCP \eqref{SCP_ad}, and, subsequently, obtain 
	\begin{align}
		&\mathcal{P} \!=\! \begin{bmatrix}
			0.0349 & 0.0203  &  0.0030\\
			0.0203  &  0.0575  &  0.0145\\
			0.0030  &  0.0145  &  0.0067
		\end{bmatrix}\!, c_a \!=\! 0.0078, \eta_a \!=\! 0.1334, \gamma_a \!=\!0.6483,\notag\\
		&\pmb{\mathds{B}_a}(\zeta) = 0.0349 \zeta_{1,1}^2 + 0.0575 \zeta_{1,2}^2 + 0.0067 \zeta_2^2 + 0.0406 \zeta_{1,1} \zeta_{1,2}\notag\\
		&\hspace{1cm} + 0.0060 \zeta_{1,1} \zeta_2 + 0.0290 \zeta_{1,2} \zeta_2,\notag\\
		& u^+ \! = \!  \begin{bmatrix}
			\! -0.4188  &  \! -2.1559  &  \! -0.0057 & \!0 & \!0 & \!0 & \!0 & \!0 & \!0 & \!0
		\end{bmatrix} \! f(x, u). \label{ff}
	\end{align}
	Thus, we can assert that the dynamic safety controller \eqref{ff} can make the system safe over the time horizon of $T = 66$, with a maximum risk of $1\%$ and a confidence of at least $1 - 10^{-10}$. This implementation took $2.38$ seconds to be completed. By utilizing this approach, we clearly extend the time horizon of the safety guarantee.

    \textbf{Second Case Study.} We now illustrate the scalability of our framework over a highly nonlinear system with $10$ state variables and $5$ control inputs, which evolves as follows:
    \begin{align}
		\Upsilon\!: \begin{cases}
			x^+_1 = x_1 + \tau\big(x_2 + \cos(u_1)\big),\\
			x^+_2  = x_2 + \tau\big(-x_1 + x_2 - \tanh{\big(1+x_1^2\big)} + u_1\big),\\
            x^+_3 = x_3 + \tau\big(x_4 + \sin(u_2)\big),\\
			x^+_4 = x_4 + \tau\big(-x_3 + x_4 - \sin\big(1+x_3^2\big) + u_2\big),\\
            x^+_5 = x_5 + \tau\big(x_6 + \cos(u_3)\big),\\
			x^+_6 = x_6 + \tau\big(-x_5 + x_6 - \tan^{-1}\big(1+x_5^2\big) + u_3\big),\\
            x^+_7 = x_7 + \tau\big(x_8 + \sin(u_4)\big),\\
			x^+_8 = x_8 + \tau\big(-x_7 + x_8 + \cos\big(1+x_7^2\big) + u_4\big),\\
            x^+_9 = x_9 + \tau\big(x_{10} + \cos(u_5)\big),\\
			x^+_{10} = x_{10} + \tau\big(-x_9 + x_{10} + \tan^{-1}\big(1+x_9^2\big) + u_5\big),
		\end{cases} \label{eq: ex2_sys}
	\end{align}
    with $\tau = 0.01$ being the sampling time. Each control input is constrained by $-30 \leq u_i \leq 30, \forall i \in \{1, 2, \dots, 5\}.$ For the sake of brevity, we assume that the available dictionary \eqref{eq: dictionary} includes only the terms that appear in the actual system's dynamics. 
    
    The regions of interest are given as $\mathcal{X} = [-10, 10]^{10}, \mathcal{X}_{\mathrm{o}} = [-2, 2]^{10},$ and $\mathcal{X}_1 = [-10, -7]^{10} \cup [7, 10]^{10}$. Similar to the previous case study and according to the procedure introduced in Subsection \reff{subsec:a}, we can obtain A-dt-GNS $\Upsilon_{\! a}$, which is omitted here due to the space limitation. Now, by setting $\epsilon_1 = \frac{1}{3000}$ and $\epsilon_2 = \frac{2999}{3000}$ (cf. Remark \reff{remark 2}), we have $\mathcal{X}^\zeta \coloneq [-10, 10]^{10} \times [-30.01, 30.01]^5,$ $\mathcal{X}^\zeta_{\mathrm{o}} \coloneq [-2, 2]^{10} \times [-0.1, 0.1]^5,$ and $\mathcal{X}^\zeta_{\mathrm{1}} \coloneq [-10, -7]^{10} \times [-30.01, 30.01]^5 \cup [7, 10]^{10} \times [-30.01, 30.01]^5 \cup [-10, 10]^{10} \times [-30.01, -30]^5 \cup [-10, 10]^{10} \\\times [30, 30.01]^5$.

  To obtain the A-CBC and its safety controller, we first gather input-state data as per \eqref{eq: data} over the time horizon $\mathcal{T} = 100.$ We also fix $\varpi=0.01.$ Utilizing our first proposed approach and after solving the SDP \eqref{eq: thm}, we obtain $c_a = 0.0566, \eta_a = 11.1984, \gamma_a = 42.9171,$ and the dynamic safety controller designed as
  
  \begin{align}\notag
  	u_1^+ &= -0.6692x_1   -2.3792x_2   -0.0268x_3   -0.0695x_4  +  0.1522x_5\\\notag
  	& \;\;\:\,   -0.3144x_6   -0.0208x_7   -0.1390x_8  +  0.0556x_9   -0.0774x_{10}\\\notag
  	&\;\;\:\,-1.0306u_1   -0.0444u_2 -0.0776u_3   -0.0292u_4   -0.0397u_5,\\\notag
  	u_2^+ &= -0.0396x_1   -0.1153x_2   -0.5529x_3   -2.6086x_4    +0.0525x_5\\\notag
  	&\;\;\:\,-0.0256x_6    +0.0532x_7   -0.0679x_8   -0.1685x_9   -0.1198x_{10}\\\notag
  	&\;\;\:\,-0.0344u_1   -1.1050u_2 -0.0219u_3   -0.0445u_4   -0.0093u_5,\\\notag
  	u_3^+ & = 0.2123x_1   -0.1845x_2    +0.0035x_3   -0.1587x_4   -0.7266 x_5\\\notag
  	&\;\;\:\, -2.3380 x_6  -0.1126 x_7  -0.0827x_8   -0.0481x_9   -0.1704 x_{10}\\\notag
  	& \;\;\:\, -0.0796u_1   -0.0317u_2   -1.0287u_3   -0.0145u_4   -0.0344u_5,\\\notag
  	u_4^+ &= 0.0107x_1   -0.0948x_2   -0.0110  x_3 -0.2080x_4   -0.0628 x_5\\\notag
  	&\;\;\:\, -0.0234x_6   -0.7123 x_7  -2.4416 x_8  -0.2346 x_9  -0.1244 x_{10}\\\notag
  	&\;\;\:\,-0.0383u_1   -0.0586u_2   -0.0195 u_3  -1.0373  u_4 -0.0128u_5,\\\notag
  	u_5^+ &= 0.0438x_1   -0.1480 x_2  -0.1271x_3    +0.0126 x_4   +0.0417   x_5\\\notag
  	&\;\;\:\,-0.1093x_6   -0.2061x_7    +0.0015 x_8  -0.8210 x_9  -2.3757 x_{10}  \\\notag
  	&\;\;\:\,-0.0497 u_1  -0.0137 u_2  -0.0445 u_3  -0.0140 u_4  -1.0191u_5.
  \end{align}
   Under this designed controller, we guarantee that the system \eqref{eq: ex2_sys} remains safe over the time horizon of $T = 560.$ This case study clearly demonstrates that our framework is scalable when coping with highly nonlinear systems with a high number of state variables. The implementation was completed in $0.84$ seconds, offering applicability to runtime implementation.
    	
\section{Conclusion and Discussion}\label{Section: Conc}
Our paper developed a direct data-driven methodology for discrete-time general nonlinear systems, enabling the joint learning of CBCs and dynamic controllers to uphold safety properties under input constraints. Although our approach provides safety guarantees within a finite time horizon, it is capable of handling \emph{general nonlinear systems} with unknown dynamics—a highly challenging task. Even when the model is known, model-based computational tools like SOS struggle with our case studies due to the presence of nonlinear terms such as $\sin(u)$ and $\ln\big(1 + x_1^2\big)$. In such cases, model-based SMT solvers are required; however, despite requiring precise knowledge of models, they may encounter termination issues~\cite{wongpiromsarn2015automata} or scalability challenges depending on system complexity~\cite{abate2022neural}. On a further note, the finite horizon guarantee in our work is not restrictive, as we solved the first complex problem with a guarantee of $T=15$ in only $0.24$ seconds. This implies that, if a longer guarantee is required, our scalable LMI conditions can be re-solved every $15$ time steps to obtain an updated guarantee \emph{in under a second}, allowing the conditions to be solved even runtime. As a future direction, our proposed framework can be extended to scenarios with exogenous, bounded disturbances, aiming to design safety controllers that ensure robustness against \emph{unknown-but-bounded} disturbances. Another direction for future work is to explore a solution that guarantees \emph{infinite-horizon} safety.
	
\bibliographystyle{ieeetran}
\bibliography{biblio}
	
\end{document}